\newtheorem{theorem}{Theorem} 
\newtheorem{definition}[theorem]{Definition}
\newtheorem{example}[theorem]{Example}
\newtheorem{lemma}[theorem]{Lemma}
\newtheorem{prop}[theorem]{Proposition}
\newtheorem{cor}[theorem]{Corollary}
\newtheorem{remark}[theorem]{Remark}
\newcommand{\reals}{\mathbb{R}}
\newcommand{\R}{\mathbb{R}}
\newcommand{\naturals}{\mathbb{N}}
\newcommand{\pos}{\operatorname{pos}}
\newcommand{\ra}{\rightarrow}
\newcommand{\intersect}{\cap}
\newcommand{\transpose}{\top} 
\newcommand{\im}{\operatorname{im}}
\newcommand{\interior}{\operatorname{int}}
\newcommand{\ind}[1]{\mathbb{I}\{#1\}} 
\newcommand{\pr}[1]{P\left(#1\right)} 
\newcommand{\tildepr}[1]{\tilde{P}\left(#1\right)} 
\newcommand{\ex}[1]{\mathbb{E}[#1]} 
\newcommand{\var}[1]{\operatorname{Var}(#1)}
\newcommand{\corr}[1]{\operatorname{Corr}(#1)}
\newcommand{\cbr}[1]{\{#1\}}
\newcommand{\eqd}{\stackrel{d}{=}} 
\newcommand{\tFS}{{e^{\eta_{FS}}}}
\newcommand{\tF}{{e^{\eta_F}}}
\newcommand{\tS}{{e^{\eta_S}}}
\title{Graphical Models for Correlated Defaults}
\author{I.Onur Filiz\footnote{Department of Industrial Engineering and Operations Research, University of California Berkeley, CA 94720-1777.
 Email: onurf@ieor.berkeley.edu}\ \ Xin
Guo\footnote{Department of Industrial Engineering and Operations Research, University of California Berkeley, CA 94720-1777.
 Email: xinguo@ieor.berkeley.edu}\ \ Jason
Morton\footnote{Department of Mathematics, Stanford University Palo Alto, CA 94305.
 \mbox{Email: jason@math.stanford.edu}}\ \ Bernd
Sturmfels\footnote{Department of Mathematics, University of California Berkeley, CA 94720-3840. Email: bernd@math.berkeley.edu}
}
\begin{document}
\setkeys{Gin}{width=8cm}
\maketitle

\begin{abstract}

A simple graphical model for correlated defaults is proposed, with
explicit formulas for the loss distribution.
Algebraic geometry techniques are employed to show that this
model is well posed for default dependence: it represents any
given marginal distribution for single firms and pairwise correlation matrix. These
techniques also provide a calibration algorithm based on
maximum likelihood estimation. Finally, the model is compared with
standard normal copula model in terms of tails of the loss distribution
and implied correlation smile.

\end{abstract}

\section{Introduction}

Credit risk concerns the valuation and hedging of defaultable
financial securities. (See e.g.~Bielecki and Rutkowski~\cite{BR02}, Duffie and Singleton~\cite{DS03},
Bielecki et~al.~\cite{BJR03},
Lando~\cite{Lando04}, and the references therein).
Since investors almost always engage in a range of different
instruments related to multiple firms, successful
modeling of the interaction of default risk for multiple firms is
crucial for both risk management and credit derivative pricing.
The significance of default correlation
is highlighted by the  current financial crisis.

There are a number of approaches for modeling correlated default.
\mbox{Collin-Dufresne} et~al.~\cite{Collin-DufresneGoldsteinEtAl:03},
Duffie and Singleton~\cite{DuffieSingleton:99a} and
Sch\"onbucher and Schubert~\cite{SchonbucherSchubert:01}
extend the reduced form
models by assuming correlated intensity processes. Intensity-based
models, however, tend to induce
unrealistic levels of correlation.
Hull et~al.~\cite{HullPredescuWhite:06}, Hull and White~\cite{HullWhite:01}, and Zhou~\cite{Zhou:01b} take the
structural form approach and use correlated asset processes,
extending the classical framework of Black and Cox~\cite{BlackCox:76}.
These models nevertheless imply spreads close to zero
for short maturities, similar to their single-firm counterparts.

Other approaches for default dependence include the so-called
``contagion models'', where default of
one firm affects the default process of the remaining firms. For
example, Davis and Lo~\cite{DavisLo:01} use binary random variables for the
default state of each firm, where these random variables are a
function of a common set of independent identically distributed
binary random variables. 
Jarrow and Yu~\cite{JarrowYu:01} extend the reduced form setup by assuming that the intensity for
the default process of each firm explicitly depends on the default
of other firms, thus one default causes jumps in intensities of
other firms' default processes. Giesecke and Weber~\cite{GieseckeWeber:06} place the
firms on the nodes of a multi-dimensional lattice, and model their interaction by
employing the {\em voter model} from the theory of interacting particle systems.
The top-down approach, on the other hand,
models the credit porfolio as a whole, focusing on the loss process rather than the processes of individual
firms. Some examples of this approach include
Giesecke and Goldberg~\cite{GieseckeGoldberg:05},
Errais et~al.~\cite{ErraisGieseckeEtAl:06},
Frey and Backhaus~\cite{FreyBackhaus}, Sch\"onbucher and Ehler~\cite{SchonbucherEhlers} and
Sidenius et~al.~\cite{SideniusEtAl}. These models present their strength in situations where the
modeling the individual firm process is not of primary importance:
modeling index reference portfolios or when the firms are very small
in comparison to the portfolio.

The binomial expansion method~\cite{CifuentesOConnor:96}, Credit
Suisse's CreditRisk$+$~\cite{CreditSuisse-CreditRiskPlus}, and
J.P.~Morgan's
CreditMetrics~\cite{GuptonFingerBhatia-CreditMetrics} are
well-known approaches in the finance industry. While BET represents the
loss distribution as a binomial random variable with the number of
trials in between the two extremes, CreditMetrics and
CreditRisk$+$ focus on individual defaults. Distribution of the
random variable representing the state of the firm is
parameterized by a set of factors which are shared among the
firms, but with varying weights. In contrast, copula models
(see Sch\"onbucher~\cite{Schonbucher:03} for a general survey,
and Li~\cite{Li:00}, Vasicek~\cite{Vasicek:87} for the normal
copula) separate the modeling of the interdependence of random
variables from the modeling of their marginal distributions.
Though popular due to its tractability,  normal copula suffers
from two well-recognized deficiencies: a) it fails to produce fat
tails observed in  the credit derivatives market for the
distribution of number of losses; b) the implied correlations in a
normal copula for the equity and senior tranches of a
Collateralized Debt Obligation(CDO) are higher than those for the
mezzanine tranches, a phenomenon known as  the ``correlation
smile''.

\paragraph{Our work}

In this  paper, a simple graphical model for
correlated defaults is proposed and analyzed (Section~\ref{sec:model_description}).  This model has an intuitive graphic
structure and the loss distribution for its special one-period version is simply a summation
of binomial random variables. This model is well posed
in capturing default dependence in the following sense:
it can  represent any
given marginal distribution for single firms and pairwise
correlation matrix. Techniques from
algebraic geometry are employed to prove this well-posedness
and to provide a calibration algorithm for the model.
Explicit formulas for the loss distribution and for CDO prices are
derived.
Finally, unlike the standard normal copula approach,
this model can produce fat
tails for loss distributions and  correct the correlation smile (Section~\ref{sec:analysis_comparison}).

In addition to the proposal and analysis of a simple model for
default correlation, one major contribution of this paper is the
introduction of a new algebraic technique to study  inequalities
implied in correlation structures. As correlation in any
multi-variate probability distribution naturally leads to certain
linear or non-linear inequalities, we are hopeful that this new
tool will provide a powerful alternative to existing approaches
such as copulas in the mathematical finance literature for
analyzing default correlation.

\section{The Graphical Model for Defaults}
\label{sec:model_description}

In this section a class of hierarchical models is formulated to
model default risk for multiple names. The most generic form
of the model is first presented, followed by a specialization
to homogeneous parameterizations for ease of
calibration and comparison with existing models.
To provide context, the simple terminology of
``firms'', ``sectors'' and ``default'' is adapted, although our
model is applicable in any generic context
with interaction between multiple entities.
In the finance context it includes any type of asset backed security (ABS) on multiple
names. For example,  one can represent a Collateralized Mortgage
Obligation (CMO) with our proposed graph structure, simply by  replacing
``sectors'', ``firms'' , and ``default'' with ``geographical
region'', ``mortgage holder'', and ``refinancing or default''
respectively.

\subsection{General Form} 
\label{ssec:general_form}

Take an undirected graph $G=(V,E)$ with $M$ nodes, and
denote the set of nodes by $V:=\cbr{1,\ldots, M}$. The edge set
$E$ is a subset of ${M \choose 2}$ possible pairwise connections
between any pairs of nodes, i.e. $E \subseteq
\cbr{(u,v): 1 \leq u < v \leq M }$. Each node of the
graph corresponds to a firm and has an associated binary random
variable $X_i, \, (i \in V)$ with $X_i = 1$ representing the
default of firm $i$ and $X_i = 0$ the survival. The joint
probability distribution of the random variable $X := (X_1,
\ldots, X_M)$ is given~by
\begin{eqnarray}
    p_w(\eta) :=\pr{X = w} &=& \frac{1}{Z} \cdot
     \exp\left( \sum_{i \in V}\eta_i w_i \,+ \sum_{(u,v) \in E}\eta_{uv} w_u w_v \right)
    \label{eq:Ising}
\end{eqnarray}
Here $w=(w_1,\ldots,w_M)$ runs over $ \cbr{0,1}^M$, the scalars
$ \eta_i \in \reals$ and $\eta_{uv} \in \reals$ are parameters,
and $Z$ is the
normalization constant known as the {\em partition function}:
$$  Z \quad = \,\,\sum_{w \in \cbr{0,1}^M} \exp\,\bigl(\, \sum_{i \in V}\eta_i w_i
\,+ \sum_{(u,v) \in E}\eta_{uv}w_u w_v \bigr),  $$

It is worth mentioning that Kitsukawa
et~al.~\cite{KitsukawaEtAl:06} and Molins and
Vives~\cite{MolinsVives:04}) have suggested using the long range
Ising model (LRIM) in the credit risk context. However, these
models are special cases of our formulation, and make use of physical concepts with no clear financial interpretation.  They restrict the structure to a single-period model with a completely connected graph and assume that all edge interactions are homogeneous.  We investigate heterogeneous connections and a sector model, analyze the multi-period setting, and provide pricing formulas.

\paragraph{Well-posedness of the model}

Probabilistic models of the form~(\ref{eq:Ising}) are also known
as {\em Markov random fields}, as
{\em Ising model} in physics,
or as {\em graphical models}
in computer science \cite{Jordan:99} and statistics
\cite{Lauritzen:96}.
In the finance context,
assessment of default correlation is usually assumed to identify
the following two sets of sufficient statistics:

\begin{itemize}
    \item The {\em marginal default probability}  $\pr{X_i=1}$
    is known for each firm $i \in V$.
    \item The {\em pairwise linear correlation}
        \begin{eqnarray}
            \label{eq:corrdefn}
            \rho_{uv}\,\,\,= \,\,\,\frac{\pr{X_u = X_v = 1} \,-\, \pr{X_u {=} 1}\cdot \pr{X_{v} {=} 1}}
            {\sqrt{\pr{X_u{=}1} \cdot \left( 1 -
            \pr{X_u {=} 1} \right)\cdot \pr{X_v {=} 1} \cdot \left( 1 - \pr{X_v {=}1} \right)}}
        \end{eqnarray}
        is assumed to be known for all pairs of firms
        $u$ and $v$ that share an edge in $E$.
\end{itemize}
Therefore, we shall demonstrate that this model is well posed for
modelling correlated default:  for every set of marginal default
probabilities and correlations, there exists a unique set of
parameters matching that information.

Clearly, data on the marginal default probabilities and the pairwise
linear correlations is equivalent to the following
set of $M+|E|$ sufficient statistics:
\begin{itemize}
    \item The single node marginals $\,\,P_i := \pr{X_i = 1} \quad $ for all $i \in V$.
    \item The double node marginals $\,\,P_{uv} := \pr{X_u = X_v = 1}\,\,$ for all $\,(u, v) \in E$.
\end{itemize}
Denoting this set of marginals by
 $\,P_\bullet \in {[0,1]}^{M+|E|}$, we shall show:
\begin{theorem}
    \label{prop:representation}
    Assume any given set of statistics $P_\bullet$ from {\em some} probability distribution on $M$ binary random
    variables.
    Then, there exists a unique set of parameters $\eta_i, \eta_{uv}$ such that the single
    and double node marginals implied by Equation~(\ref{eq:Ising}) match $P_i, P_{uv}$.
\end{theorem}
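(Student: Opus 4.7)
The plan is to recognize equation~(\ref{eq:Ising}) as an exponential family with natural parameters $\eta = (\eta_i,\eta_{uv})$ and sufficient statistics $T(w) = ((w_i)_{i\in V}, (w_u w_v)_{(u,v)\in E})$; the claim then becomes the classical fact that the moment map $\eta \mapsto \mathbb{E}_\eta[T(X)]$ is a bijection from $\reals^{M+|E|}$ onto the interior of the marginal polytope $\cp_G := \conv\cbr{T(w) : w \in \cbr{0,1}^M}$. Since the single- and double-node marginals $P_i$ and $P_{uv}$ are by definition the coordinates of $\mathbb{E}_\eta[T(X)]$, this is precisely the desired statement. Under the reparameterization $\theta_i = e^{\eta_i}$, $\theta_{uv} = e^{\eta_{uv}}$, equation~(\ref{eq:Ising}) becomes a monomial parameterization, so that the image of the parameter map is a toric variety intersected with the open probability simplex; this is where algebraic geometry naturally enters.

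For uniqueness I would use strict convexity of the log-partition function $\log Z(\eta)$. A direct computation gives $\nabla \log Z(\eta) = \mathbb{E}_\eta[T(X)]$ and $\nabla^2 \log Z(\eta) = \operatorname{Cov}_\eta(T(X))$. The Hessian is positive definite because the coordinate functions $w_i$ and $w_u w_v$ are linearly independent as real-valued functions on $\cbr{0,1}^M$; hence no nontrivial linear combination of them has vanishing variance under a full-support distribution. Strict convexity of $\log Z$ immediately implies that its gradient---the moment map---is injective, so at most one $\eta$ can realize the prescribed $P_\bullet$.

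For existence, assume $P_\bullet$ arises from a strictly positive distribution on $\cbr{0,1}^M$, so that $P_\bullet$ lies in the interior of $\cp_G$. Consider the strictly convex function $F(\eta) := \log Z(\eta) - \langle \eta, P_\bullet\rangle$, whose critical points solve $\nabla \log Z(\eta) = P_\bullet$. The main obstacle is showing that $F$ attains a minimum: one must rule out directions of recession. Along a ray $\eta = t d$ with $t \to \infty$, $\log Z(td)$ grows asymptotically like $t \cdot \max_{w} \langle d, T(w)\rangle$, while $\langle td, P_\bullet\rangle = t \cdot \langle d, P_\bullet\rangle$; interiority of $P_\bullet$ in $\cp_G$ forces the former to strictly exceed the latter for every nonzero direction $d$, giving coercivity and hence a unique minimizer $\eta^{\star}$. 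Satisfaction of the first-order condition at $\eta^{\star}$ completes the proof. This is essentially Birch's theorem from algebraic statistics, and its constructive form via convex optimization---equivalently, maximum likelihood on the toric parameterization---is what will underlie the calibration algorithm promised in the introduction.
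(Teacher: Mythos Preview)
Your proposal is correct and, at the structural level, coincides with the paper: both reduce the theorem to Birch's theorem for the toric model with design matrix $A_G$, i.e., to the bijection between the natural parameter space $\reals^{M+|E|}$ and the interior of the marginal polytope $\conv(A_G)$. You are also more careful than the paper in flagging the interior assumption on $P_\bullet$; the paper's statement of Birch in the appendix likewise yields only the interior of the polytope, so the boundary case is tacitly excluded in both treatments.

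Where your argument genuinely differs is in \emph{how} Birch's theorem is established. You give the standard convex-analytic proof: $\log Z$ is strictly convex because $\nabla^2\log Z=\operatorname{Cov}_\eta(T(X))$ is positive definite (linear independence of the sufficient statistics on $\cbr{0,1}^M$), and the dual function $F(\eta)=\log Z(\eta)-\langle\eta,P_\bullet\rangle$ is coercive because interiority of $P_\bullet$ forces $\max_w\langle d,T(w)\rangle>\langle d,P_\bullet\rangle$ for every direction $d\neq 0$. The paper's appendix instead follows Fulton's differential-geometric route: it shows the moment map $F(\eta)=\sum_j t^j e^{\langle a_j,\eta\rangle}a_j$ is a local diffeomorphism via positive-definiteness of its Jacobian, obtains global injectivity by reducing to $d=1$ along lines, and then proves the image is convex by an inductive slicing argument. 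Your approach is shorter and ties directly to the MLE/calibration discussion that follows; the paper's approach yields the stronger conclusion that the moment map is a real-analytic isomorphism, not merely a bijection.
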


The proof of Theorem~\ref{prop:representation} relies on
techniques from algebraic geometry.
The key ingredients of the proof are illustrated through a simple example to gain
some insight. These ingredients are essential for model
implementation as well (see Section
\ref{sec:parameter_estimation}).

The first ingredient is an integer matrix $A_G$ associated with
the graph model.

\begin{example}
\label{ex:triangle} \rm
Let $G$ be the triangle with $\,V = \{1,2,3\}\,$and $\,E =
\bigl\{\{1,2\}, \{1,3\}, \{2,3\}\bigr\}$. The marginals $P_i$,
$P_{uv}$ are characterized by the following $16$ linear equalities:
\begin{eqnarray}
& P_i \geq P_{ij} \geq 0 \,\,\,\hbox{for all} \,\, i,j  , \nonumber \label{ineq1}\\
& P_1+P_2 \,\leq\, P_{12} + 1\,,\,\,
P_1+P_3 \,\leq\, P_{13} + 1\,,\,\,
P_2+P_3 \,\leq \,P_{23} + 1\,, \label{ineq2} \\
&  \, P_1 + P_{23} \,\leq \, P_{12} + P_{13}  \,,\,\,
 \, P_2 + P_{13} \,\leq \, P_{12} + P_{23}  \,,\,\,
 \, P_3 + P_{12} \,\leq \, P_{13} + P_{23}  \,,
\label{ineq3} \nonumber \\
& {\rm and} \quad
P_1+P_2+P_3 \,\leq\, P_{12}+P_{13}+P_{23} + 1.
\label{ineq4} \nonumber
\end{eqnarray}
These inequalities can be derived in the following way. First
consider the expansion of marginal default probabilities in terms
of elementary probabilities $p_{000}, p_{001}, \ldots, p_{111}$:
\begin{displaymath}
	P_i \,\,= \sum_{w \in \cbr{0,1}^3} a_{iw} p_w
\end{displaymath}
where $a_{iw} \in \cbr{0,1}$. Then construct a  $\cbr{0,1}$ valued
matrix $A_G$  using the $a_{iw}$ values, where each row
corresponds to a marginal probability, whereas the columns
correspond to the elementary probabilities. For the example, this
matrix becomes
\begin{equation} \label{trianglematrix}
\bordermatrix{
        & p_{000} & p_{001} & p_{010} & p_{011} & p_{100} & p_{101} & p_{110} & p_{111} \cr
P_1 &   0   &   0    &   0   &  0    &   1    &   1    &   1    &   1  \cr
P_2 &   0   &   0    &   1   &  1    &   0    &   0    &   1    &    1 \cr
P_3 &   0   &   1    &   0   &  1    &   0    &   1    &   0    &    1 \cr
P_{12} & 0 &  0    &   0   &  0    &   0    &   0    &   1    &    1  \cr
P_{13} & 0 &  0    &   0   &  0    &   0    &   1     &   0    &    1  \cr
P_{23} & 0 &  0    &   0   &  1    &   0    &   0    &   0    &    1
   }
\end{equation}
Then we have
\begin{displaymath}
    \cbr{P_\bullet: P_\bullet \textrm{ satisfies inequalities~(\ref{ineq2})}}   \,=\,
 \cbr{P_\bullet\,:\, P_\bullet = A_G \cdot p , \,
p \in \reals_+^8 \,\,{\rm and} \! \sum_{w \in \cbr{0,1}^M}p_w = 1}.
\end{displaymath}
In other words, the solution set of the $16$ linear inequalities in
(\ref{ineq2}) is the six-dimensional polytope
which can be obtained by
taking the convex hull of the columns of $A_G$.

\end{example}

\begin{remark}  \rm
If $G$ is the complete graph on $M=4$ nodes then the corresponding
$10$-dimensional polytope is
described by $56$ facet-defining inequalities. In general,
the number of facets of this polytope grows at least exponentially
in $M$. See Wainwright and Jordan~\cite{WainwrightJordan:06} for an approach which
carefully addresses these  issues of complexity.
\end{remark}

In general, our graphical model can be represented as a {\em toric
model} as in Geiger et~al.~\cite{GeigerMeek} or Pachter and
Sturmfels~\cite[\S 1.2]{PachterSturmfels:05} by defining the
appropriate integer matrix $A_G$. This matrix represents the
linear map which takes the 
vector of elementary probabilities to the vector of marginals.
To be precise the matrix $A_G$ has $2^M$ columns and $M+|E|+1$ rows and
its entries are in $\{0,1\}$. The columns of $A_G$ are indexed by
the elementary probabilities
$$ p_w \,= \,p_{w_1 w_2 \cdots w_M} \,\, = \,\,  P(X_1 = w_1, \cdots , X_M = w_M) ,
\,\,\quad w = (w_1,\ldots, w_M) \in \{0,1\}^M. $$ All rows but the
last are indexed by the marginals $P_i$ for $i \in V$ and 
the correlations $P_{uv}$ for $\{u,v\} \in E$. 
The entries in these rows are the coefficients in
the expansion of the marginals in terms of the $p_w$. The last row
of $A_G$ has all entries equal to one, and it corresponds to
computing the trivial marginal $\,\sum_{w \in \{0,1\}^M} p_w = 1$.

To be consistent with the algebraic literature, we replace the model
parameters by their exponentials, thus obtaining new parameters
that are assumed to be positive:
$$ \theta_i := {\rm exp}(\eta_i) \,\,\,\hbox{for} \,\,i \in V \quad
\hbox{and} \quad
\theta_{uv} := {\rm exp}(\eta_{uv})
\,\,\,\hbox{for}\,\, \{u,v\} \in E. $$
The model parameterization~(\ref{eq:Ising}) now translates into the monomial form of \cite[\S 1.2]{PachterSturmfels:05},
\begin{equation}
\label{eq:toric}
 p_w \,\,\,= \,\,\,\frac{1}{Z} \cdot \prod_{i\in V} \theta_i^{w_i} \cdot \prod_{(u,v) \in E}
 \theta_{uv}^{w_u w_v} ,
 \end{equation}
 where the elementary probabilities are the monomials corresponding to the columns of $A_G$.
 The last row of $A_G$ contributes the factor $\frac{1}{Z}$.
 In multi-dimensional form, the function mapping parameters to the elementary probabilities is then defined as:
 \begin{equation}
	 f : \reals^{M+|E|} \rightarrow \reals^{2^M},
	 \quad \theta \mapsto \frac{1}{\sum_{j=1}^{2^M} \theta^{a_j}}\left(\theta^{a_1}, \cdots, \theta^{a_{2^M}}  \right)
	 \label{eq:toric_mapping}
 \end{equation}
 where $\theta^{a_j}:= \prod_{i=1}^{M+|E|} \theta_i^{a_{ij}}$.
  The model is the subvariety of the $(2^M-1)$-dimensional probability
 simplex cut out by the binomial equations
 $$  \prod_w p_w^{C_w} \,-\,  \prod_w p_w^{D_w} \,\,\, = \,\,\, 0 ,$$
 where $C,D$ run over pairs of vectors in $ \naturals^{2^M}$ such that $\,A_G \cdot C = A_G \cdot D$.

With the new notation, one can represent Example~\ref{ex:triangle}
in the following way.
$A_G$ is the $7 {\times} 8$-matrix obtained by augmenting
(\ref{trianglematrix}) with a row of ones. The model
parameterization (\ref{eq:toric}) leads to
\begin{eqnarray*}
& \bigl(\,p_{000}\,,\,p_{001}\,,\,p_{010}\,,\,p_{011}\,,\,p_{100}\,,\,p_{101}\,,\,p_{110}\,,\,p_{111}\bigr)
\qquad \qquad \quad \, \\
= &
\bigl( \,1,\, \theta_3\,, \,\theta_2\,, \,\theta_2 \theta_3 \theta_{23}\,,\,
\theta_1\,, \,\theta_1 \theta_3 \theta_{13}\, , \, \theta_1 \theta_2 \theta_{12}\,,\,
\theta_1 \theta_2 \theta_3 \theta_{12} \theta_{13} \theta_{23} \,\bigr)
\end{eqnarray*}
The model is the hypersurface in the seven-dimensional probability
simplex given by
\begin{equation}
\label{quarticbino}
 p_{000} p_{011} p_{101} p_{110} \,\, = \,\, p_{001} p_{010} p_{100} p_{111}.
 \end{equation}

Next, with this new notation, we introduce the second ingredient
of the proof: Birch's theorem, which implies that this
six-dimensional toric hypersurface
(\ref{quarticbino}) is mapped bijectively onto the
six-dimensional polytope  (\ref{ineq2})
under the linear map $A_G$.

\begin{theorem}[Birch's theorem]
Every non-negative point on the toric
variety specified by an integer matrix $A$ is mapped bijectively
onto the corresponding polytope ${\rm conv}(A)$ under the linear
map $A$.
\end{theorem}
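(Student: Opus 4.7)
The plan is to establish bijectivity of the moment map $p \mapsto Ap$ on the non-negative part of the toric variety $V_A$ via the maximum entropy principle combined with strict convexity of the log-partition function. Assume without loss of generality that one row of $A$ is the all-ones vector, so that $f$ defined in~(\ref{eq:toric_mapping}) maps into the simplex $\Delta = \{p \in \reals_{\geq 0}^n : \sum_j p_j = 1\}$, and let $\mu : \Delta \to \conv(A)$ denote the linear map $p \mapsto Ap$.

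For surjectivity onto the interior of $\conv(A)$, I would fix $b \in \interior(\conv(A))$ and consider the fiber $\mu^{-1}(b)$, a compact convex polytope meeting the relative interior of $\Delta$. The Shannon entropy $H(p) = -\sum_j p_j \log p_j$ is strictly concave and continuous, hence attains a unique maximum $p^*$ on this fiber. A Lagrange multiplier calculation for the constraints $Ap = b$ and $\sum_j p_j = 1$ gives $\log p_j^* = \sum_i \eta_i a_{ij} - \log Z$, so $p^* = f(e^\eta)$ lies on the toric variety and realizes $b$.

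For injectivity, I would exploit the log-partition function $\psi(\eta) = \log \sum_j \exp(\eta^\transpose a_j)$. A direct computation yields $\nabla \psi(\eta) = A \cdot f(e^\eta)$, and the Hessian equals the covariance matrix of the columns $a_j$ under the distribution $f(e^\eta)$. This Hessian is positive definite as soon as the columns of $A$ affinely span (which holds after removing obvious linear redundancies), so $\psi$ is strictly convex and its gradient $A \circ f$ is injective on $\reals^d$. Combined with the previous paragraph, $A \circ f$ is a bijection onto $\interior(\conv(A))$.

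The main obstacle will be extending this interior bijection to all of $\conv(A)$, since the parametrization $f$ with $\theta > 0$ cannot produce vectors $p$ having zero coordinates. I would handle this by induction on dimension using the face correspondence: every proper face $F = \conv\{a_j : j \in J\}$ of $\conv(A)$ is itself the polytope of a toric subvariety defined by the submatrix $A_J$, and the non-negative toric variety decomposes as the disjoint union of the open toric strata attached to each face of $\conv(A)$. Applying the interior bijection face by face then assembles the desired global bijection between the non-negative toric variety and $\conv(A)$, which specialized to the triangle example reproduces the identification of the quartic hypersurface~(\ref{quarticbino}) with the six-dimensional marginal polytope from Example~\ref{ex:triangle}.
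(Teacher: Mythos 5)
Your proposal is correct in outline, but it takes a genuinely different route from the paper's. The paper (Appendix~\ref{app:birch}) argues analytically, following Fulton: it studies the \emph{unnormalized} moment map $F(\eta)=\sum_j t^j e^{\langle a_{\cdot j},\eta\rangle}a_{\cdot j}$, shows its Jacobian is a positive definite symmetric form, gets injectivity by restricting to lines and reducing to the case $d=1$, proves convexity of the image by induction on dimension via projections, checks the image approaches the extreme rays of $\pos(A)$, and finally passes from the open cone to $\interior(\conv(A))$ by lifting to the cone over the polytope and normalizing with $\eta_{d+1}=-\log Z(\eta)$. You instead run the exponential-family/maximum-entropy duality: surjectivity onto $\interior(\conv(A))$ from the unique entropy maximizer on each fiber, whose Lagrange conditions force the log-linear form, and injectivity from convexity of the log-partition function $\psi$, whose gradient is $A\cdot f(e^\eta)$ and whose Hessian is the covariance of the columns. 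Both are standard and both work; your route has the advantage of delivering Corollary~\ref{cor:max_likelihood} essentially for free, since it is the same variational argument as in Appendix~\ref{app:max_likelihood}, whereas the paper's route yields extra analytic information (a real-analytic isomorphism, with convexity of the image proved from scratch rather than read off the fiber polytope). If you write yours out, three points need explicit justification: (i) the entropy maximizer on the fiber is strictly positive (because $b\in\interior(\conv(A))$ has a strictly positive preimage in the simplex and $-p\log p$ has infinite slope at $0$), which is what licenses the Lagrange computation; (ii) with the all-ones row present, the Hessian of $\psi$ is only positive semidefinite --- the degenerate directions $v$ are exactly those with $v^{\transpose}a_{\cdot j}$ constant in $j$, and these leave $f(e^\eta)$ unchanged, so injectivity on distributions survives, but your phrase ``after removing obvious linear redundancies'' should be unpacked in precisely this way; (iii) the face stratification of the non-negative toric variety (supports of non-negative points are exactly the column sets of faces of $\conv(A)$) is the real content of the boundary extension and is asserted rather than proved --- though, to be fair, the paper's appendix proof also establishes only the interior statement and is silent about the boundary.
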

A proof of Birch's theorem can be found in
Appendix~\ref{app:birch}. For a more complete treatment, see
e.g.~\cite[Theorem 1.10]{PachterSturmfels:05}.
Now we are ready to prove Theorem  \ref{prop:representation}.

\begin{proof}[Proof of Theorem \ref{prop:representation}]
The linear map $A_G$ maps the $(2^M-1)$-dimensional probability
simplex onto the convex hull ${\rm conv}(A_G)$ of the column
vectors of $A_G$. The
mapping $A \rightarrow conv(A)$ is usually referred to as the {\em
marginal map of a log-linear model} in  statistics
(Christensen~\cite{Christensen:90}), or {\em moment map in toric
geometry} in mathematics (Fulton~\cite[\S 4]{Fulton1993}).
The convex polytope ${\rm conv}(A_G)$ therefore consists
of all vectors of marginals that arise from {\bf some} probability
distribution on $M$ binary random variables. Now applying {\em
Birch's theorem} to  the matrix $A_G$ yields the assertion of the
theorem.
\end{proof}

As a corollary, we conclude by the Main Theorem
for Polytopes~\cite[Theorem 1.1, page 29]{Ziegler:1995} that the possible marginals $P_\bullet$
arising from Equation~(\ref{eq:corrdefn})
are always characterized by a finite set of linear inequalities
as in (\ref{ineq2}).
We also note that the above techniques, especially Birch's theorem, are
instrumental for model calibration.

\paragraph{Calibration}
\label{sec:parameter_estimation}

There are several
algorithms for finding unique model parameters matching
any given set of marginal default probabilities and correlations
under the general formulation of Equation~(\ref{eq:Ising}).
The calibration problem is equivalent to maximum likelihood
estimation for toric models.
Indeed, suppose that one is given a
data vector $u \in \naturals^{2^M}$ whose coordinates specify how
many times each of the states in $\{0,1\}^M$ was observed. This
data gives rise to an empirical probability distribution with
empirical marginals $\,\frac{1}{\sum_{w=1}^{2^M} u_w} A_G \cdot u \, = \,
(P_\bullet)$, where $A_G$ is defined as in Section~\ref{ssec:general_form}.
The likelihood function of the data $u$ is the
following function of model parameters:
\begin{equation}
	\R^{M+|E|}\, \rightarrow \R_{>0} \,,\,\,\,
	\eta\, \,\mapsto \prod_{w \in \{0,1\}^M} \!\! p_w(\eta)^{u_w} \label{eq:likelihood_function}.
\end{equation}
Here $p_w(.)$ is defined as in Equation~(\ref{eq:Ising}).
Thus, a direct consequence of Theorem \ref{prop:representation} is
\begin{cor}
\label{cor:max_likelihood}
The likelihood function~(\ref{eq:likelihood_function}) has a unique maximizer $\hat{\eta}$.
This is the unique parameter vector whose probability distribution
implied by Equation~(\ref{eq:Ising}) has the empirical marginals $(P_\bullet)$.
\end{cor}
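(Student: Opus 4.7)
The plan is to reduce the corollary to Theorem~\ref{prop:representation} via the standard exponential-family identification of maximum likelihood estimation with moment matching. First, I would take logs in~(\ref{eq:likelihood_function}): writing $N := \sum_w u_w$ and using the explicit form~(\ref{eq:Ising}), the log-likelihood becomes
\begin{equation*}
\ell(\eta) \,=\, \sum_{i \in V} \eta_i \Bigl(\sum_w u_w w_i\Bigr) \,+\, \sum_{(u,v) \in E} \eta_{uv} \Bigl(\sum_w u_w w_u w_v\Bigr) \,-\, N \log Z(\eta).
\end{equation*}
By construction the bracketed sums are $N P_i$ and $N P_{uv}$, where $(P_\bullet) = \frac{1}{N} A_G \cdot u$ are the empirical marginals of the data.

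Next I would compute the gradient. Because $Z(\eta)$ is the partition function of the exponential family~(\ref{eq:Ising}), the standard cumulant identity gives
\begin{equation*}
\frac{\partial \log Z}{\partial \eta_i} \,=\, \ex{X_i}_\eta, \qquad \frac{\partial \log Z}{\partial \eta_{uv}} \,=\, \ex{X_u X_v}_\eta,
\end{equation*}
i.e.\ the model marginals at parameter $\eta$. Setting $\nabla \ell(\eta) = 0$ thus yields exactly the moment-matching equations: the model single- and double-node marginals must equal the empirical marginals $P_i$ and $P_{uv}$. Since the empirical distribution $w \mapsto u_w/N$ is itself a probability distribution on $\{0,1\}^M$ producing the statistics $(P_\bullet)$, Theorem~\ref{prop:representation} applies directly and delivers a unique $\hat\eta$ solving these equations.

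Finally I would verify that this unique critical point $\hat\eta$ is actually the global maximizer, not a saddle. The Hessian of $\ell$ is $-N$ times the Hessian of $\log Z$, which for an exponential family equals the covariance matrix of the sufficient statistics $(X_i, X_u X_v)$ under $p(\hat\eta)$. This is negative semidefinite, so $\ell$ is concave; a stationary point of a concave function on $\reals^{M+|E|}$ is automatically a global maximum, and because any second stationary point would give a second parameter vector matching $(P_\bullet)$ in contradiction to Theorem~\ref{prop:representation}, uniqueness is already in hand.

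The main obstacle is not really technical but conceptual: one must recognize that the rows of $A_G$ are precisely the sufficient statistics of the exponential family~(\ref{eq:Ising}), so that the MLE first-order conditions coincide with the Birch-type moment-matching condition solved in Theorem~\ref{prop:representation}. Once that identification is made, the rest is routine exponential-family calculus and a direct appeal to the preceding theorem.
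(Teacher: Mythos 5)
Your proof is correct and follows essentially the same route as the paper's: both reduce the maximum likelihood problem to the moment-matching condition $A_G\,\hat p = P_\bullet$ and then invoke Theorem~\ref{prop:representation} for existence and uniqueness of the matching parameter vector, the only difference being that the paper performs the equivalent computation in the $\theta$-parameterization with a Lagrange multiplier for the constraint $\sum_j \theta^{a_j}=1$, while you work unconstrained in $\eta$ via the cumulant identities. Your explicit Hessian/concavity step, showing the critical point is a global maximizer, is a nice touch that the paper leaves implicit (its appendix only shows that every local maximum must match the empirical marginals before appealing to Theorem~\ref{prop:representation}).
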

The key idea in the proof (given in
Appendix~\ref{app:max_likelihood}) implies that
computing the maximum likelihood parameters amounts to solving the following
optimization problem:
\begin{eqnarray}
	&&\max_{\eta}\,\, { {-\sum}_{w \in \{0,1\}^N} p_w \cdot {\rm log}(p_w)} \label{eq:objective} \\
	\textrm{s.t.}&& p_w = 
		\exp\left( \sum_{i \in V}\eta_i w_i \,+ \sum_{(u,v) \in E}\eta_{uv} w_u w_v \right)
		\quad \forall\, w \in \cbr{0,1}^M \label{eq:probability}\\
	&& A_G \cdot p = P_\bullet \label{eq:marginal}
\end{eqnarray}
Note that on the polytope of all probability distributions $p$
with constraint~(\ref{eq:marginal}),
the objective function~(\ref{eq:objective}) as a function of $p$ is strictly
concave with its maximizer $\hat p$ being the
distribution represented by $\hat \eta$. One can thus apply convex
optimization techniques to solve the
parameter estimation problem in our graphical model. In fact, this
optimization problem is also known as
{\em geometric programming}. See Boyd et~al.~\cite{Boyd:07} for an
introduction to this subject.

Parameter estimation in small toric models can be accomplished with
the {\em Iterative Proportional Fitting} of Darroch and Ratcliff \cite{DarrochRatcliff:72}; see Sturmfels~\cite[\S 8.4]{cbms} for an algebraic description and
a  {\tt maple} implementation. Such a straightforward
implementation of IPF requires
 iterative updates  of vectors with $2^M$ coordinates,
which is infeasible for larger values of $M$. To
remedy this challenge, one needs to turn to the large-scale
computational methods used in machine learning.  Popular methods aside from the convex optimization techniques mentioned above include those based on quasi-Newton methods such as LM-BFGS \cite{NocedalWright:99}, conjugate gradient ascent, log-determinant relaxation \cite{WainwrightJordan:06}, and local methods related to pseudolikelihood estimation (particularly in the sparse case).

\subsection{One Period Model}
\label{sec:sector}

For
both ease of exposition and numerical comparison with the existing
models in literature, we now investigate more specialized forms of the formulation.

First, we impose some structure on the graph. Take $M=N+S$, where nodes $1,
\dots, N$ represent individual firms and $N+1, \dots, N+S$ represent individual
industry sectors, so that the joint probability distribution for
$(X_1,\cdots,X_N)$ is defined as:
\begin{eqnarray}
    \label{eq:marginalization}
    \pr{X_1=x_1,\cdots,X_N=x_N}: = \sum_{s \in \cbr{0,1}^S} &&Q(X_1 = x_1, \cdots, X_N = x_N, \nonumber\\
    && X_{N+1} = s_1,\cdots,X_{N+S} = s_{N+S}).
\end{eqnarray}
Here the probability distribution $Q$ on the right hand side is
specified by Equation~(\ref{eq:Ising}).

Next, to capture the dependency among different industry sectors,
we specify the parameters $\eta_i$ and $\eta_{uv}$
as follows. We assume that each firm belongs to a particular
sector $j=1, 2, \cdots, S$ such that firm nodes $\cbr{1,\dots,N}$
are partitioned into $S$ subsets with $N_j$ elements, i.e.
$N=\sum_{j=1}^S N_j$. Moreover, a number of
homogeneity assumptions are imposed for simplicity:

\begin{itemize}
        \item Each firm node $i$ has a single edge, which connects to its
respective sector node.
    \item For any particular sector node $j$, all firm nodes that connect to it have the same node weight
        $\eta_{F_j}$ and same edge weight $\eta_{FS_j}$.
    \item Sector nodes are allowed to have different node weights
        $\eta_{S_j}$, and they can connect to each other with
        different edge weights $\eta_{N+u,N+v}$.
\end{itemize}
In short, the probability distribution for $(X_1,\cdots,X_N)$
in Equation~(\ref{eq:marginalization}) becomes
\begin{eqnarray}
    \label{eq:multiprob}
    \pr{X_1 = x_1, \dots, X_N = x_N} &=& \frac{1}{Z_S} \sum_{s \in \cbr{0,1}^S} \exp\left(
    \sum_{j=1}^S s_j \eta_{S_j} + s_j n_j \eta_{FS_j} + n_j \eta_{F_j} \right) \nonumber \\
    &&\exp\left(\sum_{(u,v): u,v \in \cbr{1,\cdots,S}} s_u s_v \eta_{N+u,N+v} \right)
\end{eqnarray}
where $n_j := \sum_{i:\eta_{i,N+j} \ne 0} x_i$ is the number of
defaulting firms in sector $j$, and $Z_S$ is the normalization constant.
Here
a sector random variable $X_{N+j}$ having value $0$ is interpreted as
that sector being financially healthy and $1$ as it being in
distress.

Note that if the graph $G$ breaks up into various connected
components, then the random variables associated with the nodes in
each component are independent of each other. This property allows
{\em conditional independence structure}s to be easily
incorporated into the model: when the state of all other other
firms is fixed, two firms not connected by an edge will default
independently of each other.  Also note that, by allowing
different parameters $\eta_{F_j}$ and $\eta_{{FS}_j}$ for each
sector, one can represent a diverse portfolio of firms,
with possibly negative pairwise default correlations.

Some simple calculation yields the loss distribution:
\begin{prop} Given the model specified by Equation~(\ref{eq:marginalization}),
\begin{eqnarray}
    \pr{\sum_{i=1}^N X_i=n} &=& \frac{1}{Z_S} \sum_{(n_1,\dots, n_S): \sum_j n_j = n} \sum_{s \in \cbr{0,1}^S}
    \exp\left(\sum_{(u,v): u,v \in \cbr{1,\cdots,S}} s_u s_v \eta_{N+u,N+v}
    \right) \nonumber \\
    &&\prod_{j} {N_j \choose n_j} \exp\left( s_j \eta_{S_j} + s_j n_j \eta_{FS_j} + n_j \eta_{F_j} \right),
    \label{eq:multiloss}
\end{eqnarray}
where
    \begin{eqnarray*}
    Z_S &=& \sum_{s \in \cbr{0,1}^S} \exp\left( \sum_{j=1}^S s_j \eta_{S_j} + \sum_{(u,v): u,v \in \cbr{1,\cdots,S}}
    s_u s_v \eta_{uv}\right) \prod_{j=1}^S \left( 1 + e^{\eta_{F_j}+ s_j \eta_{{FS}_{j}}}
    \right)^{N_j}.
    \label{eq:multiZ}
    \end{eqnarray*}
\end{prop}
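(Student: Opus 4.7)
The plan is to derive the loss distribution directly by marginalizing Equation~(\ref{eq:multiprob}) and then to verify the formula for $Z_S$ by factoring its sum over firm configurations sector by sector.

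First I would write
\[
\pr{\textstyle\sum_{i=1}^N X_i = n}
\;=\;\sum_{x \in \{0,1\}^N:\,\sum_i x_i = n} \pr{X_1 = x_1,\dots,X_N=x_N}
\]
and substitute the right-hand side of~(\ref{eq:multiprob}). The key observation is that, under the homogeneity assumptions (all firms in sector $j$ share weights $\eta_{F_j},\eta_{FS_j}$, and connect only to sector node $N{+}j$), the summand depends on $x$ only through the sector counts $n_j = n_j(x) := \sum_{i\text{ in sector }j} x_i$. I would then partition the outer sum by these counts: for a given tuple $(n_1,\dots,n_S)$ with $\sum_j n_j = n$, the number of $x \in \{0,1\}^N$ producing those counts is exactly $\prod_j \binom{N_j}{n_j}$. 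Exchanging the resulting sum with the sum over $s \in \{0,1\}^S$ (which is finite) yields~(\ref{eq:multiloss}) immediately.

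Next, I would verify the stated closed form for $Z_S$. By definition
\[
Z_S \;=\; \sum_{x \in \{0,1\}^N} \sum_{s \in \{0,1\}^S}\!\exp\!\Bigl(\sum_{j=1}^S s_j \eta_{S_j} + s_j n_j(x) \eta_{FS_j} + n_j(x)\eta_{F_j}\Bigr)\exp\!\Bigl(\sum_{u,v} s_u s_v \eta_{N+u,N+v}\Bigr).
\]
For each fixed $s$, the only $x$-dependent factor is $\exp(\sum_j(s_j \eta_{FS_j} + \eta_{F_j})\, n_j(x))$, which decomposes as a product over sectors. Since the $N_j$ firms in sector $j$ are unconstrained binary variables contributing independently, summing over the firms in sector $j$ gives
\[
\sum_{(x_i)_{i\text{ in }j} \in \{0,1\}^{N_j}} e^{(s_j \eta_{FS_j}+\eta_{F_j})\sum_i x_i}
\;=\; \bigl(1 + e^{\eta_{F_j}+s_j \eta_{FS_j}}\bigr)^{N_j}.
\]
Multiplying over $j$ and reinserting the $s$-only factors recovers exactly the expression~(\ref{eq:multiZ}) (modulo the cosmetic relabeling $\eta_{uv} \leftrightarrow \eta_{N+u,N+v}$ for the sector--sector edges).

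I do not anticipate any real obstacle: this is a bookkeeping argument. The only point demanding care is making sure the homogeneity hypotheses are used correctly so that the summand genuinely depends on $x$ only through $(n_1,\dots,n_S)$; this is what licenses both the binomial count $\prod_j \binom{N_j}{n_j}$ in~(\ref{eq:multiloss}) and the sector-wise factorization in the evaluation of $Z_S$.
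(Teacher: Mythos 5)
Your proposal is correct and follows the same route the paper intends when it says ``some simple calculation'': direct marginalization of Equation~(\ref{eq:multiprob}), grouping firm configurations by sector default counts to get the factor $\prod_j \binom{N_j}{n_j}$, and evaluating $Z_S$ by the sector-wise factorization $\bigl(1+e^{\eta_{F_j}+s_j\eta_{FS_j}}\bigr)^{N_j}$. Your attention to the homogeneity assumptions (summand depending on $x$ only through $(n_1,\dots,n_S)$) is exactly the point that justifies both steps, so nothing is missing.
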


\begin{figure}[h]
    \center
\[
   \begin{xy}<2cm,0cm>:
(2,0) *+![F]{\eta_{S_3}};
     p+(-.8,.4)  *++[o][F]{\eta_{F_3}} **@{.},
     p+(-.9,0)  *++[o][F]{\eta_{F_3}} **@{.},
     p+(-.8,-.4) *++[o][F]{\eta_{F_3}} **@{.},
     p+(-.4,-.8) *++[o][F]{\eta_{F_3}} **@{.},
p+(.5, .86) *+![F]{\eta_{S_1}} **@{-};
     p+(-.6,.6) *++[o][F]{\eta_{F_1}}  **@{.},
     p+(-.2,.8) *++[o][F]{\eta_{F_1}}  **@{.},
     p+(.2 ,.8) *++[o][F]{\eta_{F_1}}  **@{.},
     p+(.6,.6)  *++[o][F]{\eta_{F_1}}  **@{.},
p+(.5,-.86) *+![F]{\eta_{S_2}} **@{-};
     p+(.8,.4)  *++[o][F]{\eta_{F_2}}  **@{.},
     p+(.9,0)   *++[o][F]{\eta_{F_2}}  **@{.},
     p+(.8,-.4) *++[o][F]{\eta_{F_2}}  **@{.},
     p+(.4,-.8) *++[o][F]{\eta_{F_2}}  **@{.},
p+(-.86,0) **@{-};
(2.25,.43)  *+!R{\eta_{FS_{13}}} ;
(2.5,0)  *+!U{\eta_{FS_{23}}} ;
(2.95,.43)  *+!{\eta_{FS_{12}}} ;
(1.6,.2) *+{\eta_{FS_3}};
(2.8,1.16) *+{\eta_{FS_1}};
(3.2,-.4) *+{\eta_{FS_2}};
   \end{xy}
\]
    \caption{A $12$-firm graph with three sector nodes: Square nodes
    represent the sectors and circle nodes represent firms. }
    \label{fig:10firm3sector}
\end{figure}
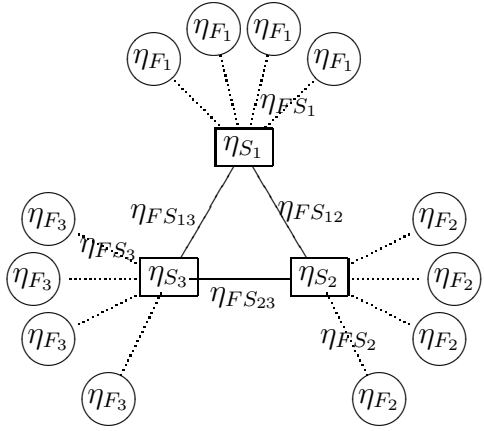

\paragraph{Connection to Binomial Distribution and Fat Tails}
Our model is related to binomial distribution with the simple
observation that the probability distribution in
Equation~(\ref{eq:multiloss}) can be decomposed into a summation of
$2 \cdot S$ independent binomial random variables. Indeed, note
that when $S=1$,
\begin{eqnarray}
    \pr{X_1=x_1,\cdots,X_N=x_N} &=& \frac{1}{Z_1} \left( e^{ \eta_F \sum_i x_i } + e^{ \eta_S +
    \left( \eta_{FS} + \eta_F \right) \sum_i x_i } \right), \label{eq:singleProb}
    \end{eqnarray}
    \begin{eqnarray}
    \pr{\sum_{i=1}^N X_i=n} &=& \frac{1}{Z_1} {N \choose n}\left[
    e^{n \eta_F} + e^{\eta_S + n \eta_F + n\eta_{FS}} \right] \label{eq:singleLoss} \\
    Z_1 &=& \left( 1 + e^{\eta_F} \right)^N + e^{\eta_S} \left( 1 + e^{\eta_F + \eta_{FS}} \right)^N, \label{eq:singleZ}
\end{eqnarray}
where $\eta_{FS_1}, \eta_{S_1}, \eta_{F_1}$ are replaced by
$\eta_{FS}, \eta_{S}, \eta_F$ respectively for notational
simplicity.
\begin{prop} Given the model (\ref{eq:marginalization}), when $S=1$,
\begin{equation}
    \sum_{i=1}^N X_i\eqd Y B_1 + (1-Y) B_2,
\end{equation}
where $Y,B_1,B_2$ are independent and distributed as
\begin{eqnarray*}
    Y &\sim&\textrm{Bernoulli}\left( \frac{\tS \left( 1 + \tF \tFS \right)^N}{Z_1} \right), \\
    B_1 &\sim& \textrm{Binomial}\left(\frac{\tF \tFS}{1+ \tF \tFS},N\right), \\
    B_2 &\sim& \textrm{Binomial}\left(\frac{\tF}{1+\tF},N\right).
\end{eqnarray*}
\label{lem:binomialdec}
\end{prop}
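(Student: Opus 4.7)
The plan is to verify the distributional identity by a direct computation of the probability mass function of the mixture $YB_1+(1-Y)B_2$ and match it term-by-term against the explicit loss formula~(\ref{eq:singleLoss}). The key observation motivating the decomposition is structural: when $S=1$, Equation~(\ref{eq:singleProb}) expresses the joint distribution as a \emph{sum of two terms}, each of which is (up to normalization) a product of independent Bernoulli masses in the $x_i$'s. The first term $e^{\eta_F \sum_i x_i}$ corresponds to $N$ i.i.d.\ coin flips with log-odds $\eta_F$, and the second term $e^{\eta_S + (\eta_F+\eta_{FS})\sum_i x_i}$ corresponds to $N$ i.i.d.\ coin flips with log-odds $\eta_F+\eta_{FS}$, weighted by $e^{\eta_S}$. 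This is the mixture structure we wish to extract.

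First I would compute the marginal probability that $Y=1$ in the proposed mixture. By assumption this is $p_Y=\tS(1+\tF\tFS)^N/Z_1$, and $1-p_Y=(1+\tF)^N/Z_1$ follows from the explicit form of $Z_1$ in~(\ref{eq:singleZ}). Next, by independence of $Y$, $B_1$, $B_2$, the law of total probability gives
\begin{equation*}
\pr{YB_1+(1-Y)B_2=n}=p_Y\cdot\pr{B_1=n}+(1-p_Y)\cdot\pr{B_2=n}.
\end{equation*}
Substituting the binomial mass functions with success probabilities $\tF\tFS/(1+\tF\tFS)$ and $\tF/(1+\tF)$ respectively, the factors $(1+\tF\tFS)^N$ and $(1+\tF)^N$ in the numerators of $p_Y$ and $1-p_Y$ cancel cleanly against the normalizers of the two binomial PMFs. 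What remains is
\begin{equation*}
\pr{YB_1+(1-Y)B_2=n}=\frac{1}{Z_1}\binom{N}{n}\bigl[e^{n\eta_F}+e^{\eta_S+n\eta_F+n\eta_{FS}}\bigr],
\end{equation*}
which is exactly~(\ref{eq:singleLoss}).

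There is no real obstacle here; the proof is essentially an algebraic bookkeeping exercise, and the only thing to verify carefully is that the normalizing constants of the two component binomials are precisely absorbed by the $(1+\tF)^N$ and $(1+\tF\tFS)^N$ factors chosen in the Bernoulli weight $p_Y$. Since these choices were made exactly for this cancellation, the identification is immediate. One could alternatively prove the proposition via probability generating functions: factor $\ex{z^{\sum X_i}}$ computed from~(\ref{eq:singleLoss}) as $p_Y((1-q_1)+q_1 z)^N+(1-p_Y)((1-q_2)+q_2 z)^N$ with $q_1=\tF\tFS/(1+\tF\tFS)$ and $q_2=\tF/(1+\tF)$, but the direct PMF approach above is shorter and makes the mixture interpretation transparent.
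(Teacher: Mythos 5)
Your proof is correct, and it is essentially the argument the paper intends: the two-term form of Equation~(\ref{eq:singleLoss}) with normalizer~(\ref{eq:singleZ}) is read directly as a Bernoulli$(\tS(1+\tF\tFS)^N/Z_1)$ mixture of Binomial$(\tF\tFS/(1+\tF\tFS),N)$ and Binomial$(\tF/(1+\tF),N)$ masses, exactly the cancellation you carry out. Nothing further is needed.
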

\begin{cor} Under the assumptions of Proposition~\ref{lem:binomialdec},
\begin{eqnarray}
    X_i &\eqd& Y R_i + ( 1 - Y ) U_i, \quad     {\rm with}
    \end{eqnarray}
    \begin{eqnarray*}
    R_i &\sim& \textrm{Bernoulli}\left( \frac{\tF \tFS}{1 + \tF \tFS} \right), \\
    U_i &\sim& \textrm{Bernoulli}\left( \frac{\tF }{1 + \tF }
    \right),
\end{eqnarray*}
and $R_1, \dots, R_N, U_1, \dots, U_N, Y$ are mutually
independent. Moreover,
\begin{eqnarray}
    \corr{X_1,X_2} &=& \corr{Y V_1 + U_1, Y V_2 + U_2} = \corr{Y V_1, Y V_2} \nonumber \\
    &=& \frac{\ex{Y^2}\ex{V_1}\ex{V_2} - \ex{Y}^2 \ex{V_1}\ex{V_2}}{\sqrt{\var{Y V_1}}\sqrt{\var{Y V_2}}}
    = \frac{\var{Y} \ex{V}^2}{\var{YV}} \nonumber\\
    &=& \frac{\var{Y\ex{V}}}{\var{YV}}
\end{eqnarray}
where $V \eqd V_i := R_i - U_i$ and independent of all other random
variables. \label{lem:correlation}
\end{cor}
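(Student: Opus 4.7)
The plan is to prove the corollary by first upgrading the mixture construction of Proposition~\ref{lem:binomialdec} from the loss $\sum_i X_i$ to the full vector $(X_1,\ldots,X_N)$, and then computing the covariance in the correlation formula by exploiting the resulting independence structure.

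For the distributional part, I would start from the joint mass function~(\ref{eq:singleProb}) and rewrite it as a convex combination of two product Bernoulli laws:
\[
 p_x \,=\, \frac{(1+\tF)^N}{Z_1}\prod_{i=1}^N p_0^{x_i}(1-p_0)^{1-x_i} \,+\, \frac{\tS(1+\tF\tFS)^N}{Z_1}\prod_{i=1}^N p_1^{x_i}(1-p_1)^{1-x_i},
\]
where $p_0 := \tF/(1+\tF)$ and $p_1 := \tF\tFS/(1+\tF\tFS)$. Using the formula~(\ref{eq:singleZ}) for $Z_1$, the two mixing weights equal $1-q$ and $q$ with $q = \tS(1+\tF\tFS)^N/Z_1$, which is precisely the Bernoulli parameter of $Y$ in Proposition~\ref{lem:binomialdec}. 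Taking $Y$, $\{R_i\}$, $\{U_i\}$ mutually independent with the stated marginals and applying the law of total probability, the vector $(YR_1+(1-Y)U_1,\ldots,YR_N+(1-Y)U_N)$ has the same joint distribution as $(X_1,\ldots,X_N)$, which yields $X_i \eqd YR_i + (1-Y)U_i$.

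For the correlation, I would rewrite $X_i = YV_i + U_i$ with $V_i := R_i - U_i$ and expand
\[
 \cov{X_1,X_2} \,=\, \cov{YV_1,YV_2} + \cov{YV_1,U_2} + \cov{U_1,YV_2} + \cov{U_1,U_2}.
\]
The last three terms vanish because $U_2$ is independent of $(Y,V_1)$, $U_1$ is independent of $(Y,V_2)$, and $U_1 \perp U_2$. For the remaining term, mutual independence of $Y$, $V_1$, $V_2$ (the two $V_i$'s depend on disjoint index sets) gives
\[
 \cov{YV_1,YV_2} \,=\, \ex{Y^2}\ex{V_1}\ex{V_2} - \ex{Y}^2\ex{V_1}\ex{V_2} \,=\, \var{Y}\,\ex{V}^2,
\]
which matches the numerator of the corollary's final expression. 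The variances in the denominator are then obtained by expanding $\var{YV_i+U_i}$, or equivalently by observing that each $X_i$ is Bernoulli with parameter $qp_1+(1-q)p_0$.

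The main pitfall is the bookkeeping around $V_i$ and $U_i$: although the mixture construction produces the convenient independencies exploited above, $V_i = R_i - U_i$ is \emph{not} independent of $U_i$, so in the variance expansion the cross-covariance $\cov{YV_i,U_i}$ must be computed rather than dropped. Apart from this careful separation of what is and is not independent across the $(Y, R_i, U_i)$ family, the corollary is essentially an immediate consequence of the mixture representation underlying Proposition~\ref{lem:binomialdec}.
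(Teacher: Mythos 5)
Your mixture decomposition of the joint law (\ref{eq:singleProb}) is correct: with $p_0=\tF/(1+\tF)$, $p_1=\tF\tFS/(1+\tF\tFS)$ and mixing weight $q=\tS(1+\tF\tFS)^N/Z_1$, the vector $(X_1,\dots,X_N)$ is a two-point mixture of product Bernoulli laws, which gives $X_i\eqd YR_i+(1-Y)U_i$ jointly together with the claimed mutual independence, and this is exactly the mechanism behind Proposition~\ref{lem:binomialdec} (the paper offers no separate proof of the corollary beyond the chain of equalities displayed in its statement). Your covariance computation is also sound: the three cross terms vanish by independence and $\cov{YV_1,YV_2}=\var{Y}\ex{V}^2$, which is the numerator in the corollary.

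The genuine problem is that the argument you describe does not deliver the displayed statement, and you never confront the mismatch. You rightly insist that $\cov{YV_i,U_i}$ must be kept; indeed $\cov{YV_i,U_i}=-\ex{Y}\var{U_i}$, so your denominator is $\var{YV_1+U_1}=\var{YV_1}+(1-2\ex{Y})\var{U_1}=\var{X_1}$, and your final formula is $\corr{X_1,X_2}=\var{Y}\ex{V}^2/\var{YV_1+U_1}$. The corollary instead asserts $\corr{YV_1+U_1,YV_2+U_2}=\corr{YV_1,YV_2}$ with denominator $\var{YV}$; that equality requires $(1-2\ex{Y})\var{U_1}=0$, which fails in general. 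For instance $\ex{Y}=0.3$, $p_1=0.9$, $p_0=0.1$ gives a true correlation of about $0.60$, while $\var{Y}\ex{V}^2/\var{YV}\approx 0.71$. So, as written, you prove a corrected version of the statement while claiming it ``matches'' the corollary: you must either justify the middle equality (possible only when $\ex{Y}=1/2$ or $\var{U_1}=0$) or explicitly flag that the denominator in the corollary should be $\var{YV_1+U_1}=\var{X_1}$ rather than $\var{YV}$. Leaving the denominator implicit hides precisely the step where your computation and the stated chain part ways.
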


One implication of Proposition~\ref{lem:binomialdec} is that
one can have control over the tails of the loss
distribution. Of the two binomial random variables, varying the parameters
affecting the center of the higher mean random variable to increase(decrease)
its mean results in thicker(thinner) tails.
Moreover, the loss distribution
may be bimodal. Bimodality can be explained by a `contagion' effect
among firms. Having a high number of defaults may make it more
likely for ``neighboring'' firms to default.
This phenomenon enables our model to
correct the so-called ``correlation smile'' (e.g.
Amato and Gyntelberg~\cite{AmatoGyntelberg:05}, Hager and
Sch\"obel~\cite{HagerSchoebel:06}) in pricing
CDOs, since a low probability for mezzanine level defaults
naturally lead to lower spreads for the respective tranche.
These will be illustrated in detail in Section~\ref{sec:copula}.

\subsection{Multi-period Model}
\label{ssec:multiperiodmodel}

In this section, we shall extend the one-period model to a
multi-period one. This extension is essential for pricing
defaultable derivatives and for comparison with standard copula
models.

The construction is as follows:
\begin{itemize}
    \item Start with a single-sector graph with $N$ firms. At each payment period $t_k$, the graph evolves by the
        defaulting of some nodes. Furthermore, some of the previously defaulted nodes are 
	removed. Economically, removal of nodes represents that
        these firms are no longer influencing or providing useful information about the default process of other firms.  Therefore, the number
        of firms remaining in the system is dynamic, and is denoted by $N_{t_k}$. Denote the number of firms
        that have defaulted up to $t_k$ by $D_{t_k}$. Then, $D_0=0$ and $N_0=N$.
    \item Each defaulted node ``stays'' in the system for a geometrically distributed number of time steps (with
        ``success'' or ``removal'' probability $p_R$), independent of everything else. This is equivalent to removing each
        defaulted node from the system with probability $p_R$, independent of everything else, at the beginning
        of $t_k$. Thus, the number of nodes that are currently in default 
        and still in the system at time $t_k$, $I_{t_k}$, is given by:
        \begin{displaymath}
            {I}_{t_k} = D_{t_k} + N_{t_k} - N
        \end{displaymath}
    \item Number of additional defaults $D_{t_{k+1}} - D_{t_{k}}$ during the period $(t_{k},t_{k+1})$ is based on a
        conditioning of the probability distribution specified by Equation~(\ref{eq:singleLoss}).  More specifically:
        \begin{eqnarray}
            &&\pr{D_{t_{k+1}} - D_{t_k} = n \mid {I}_{t_{k}} = m; N_{t_k}} \nonumber\\
            &:=& \sum_{i_1,\cdots,i_{n}: i_j \in \cbr{m+1,\cdots,N_{t_k}}}
            P\left(X_{i_1}=\cdots=X_{i_{n}}=1, X_{i_{n+1}} = \cdots = X_{i_{N_{t_k}}} = 0 \right. \nonumber\\
            && \left. \mid X_{1} = \cdots = X_{m} = 1 \right), \quad n + m \le N_{t_k}
            \label{eq:lossIncrementDistribution}
        \end{eqnarray}

\end{itemize}

This construction, together with some simple calculation, leads to
\begin{prop}
    \begin{displaymath}
        \pr{D_{t_{k+1}} - D_{t_k} = n \mid {I}_{t_{k}} = m; N_{t_k}} =
	\tildepr{n; N_{t_k} - m, \eta_S + m \eta_{FS}, \eta_{FS}, \eta_{F}}
    \end{displaymath}
    where $$\tildepr{n;N,\eta_S,\eta_{FS},\eta_F}: =\pr{\sum_{i=1}^N X_i=n}$$ as defined by
Equation (\ref{eq:singleLoss}) and
$$Z_{N,\eta_S,\eta_{FS},\eta_F}:=Z_1$$ as defined by
Equation (\ref{eq:singleZ}).
    \label{lem:lossIncrementExplicit}
\end{prop}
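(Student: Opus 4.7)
The plan is to reduce the conditional probability on the right-hand side of the definition~(\ref{eq:lossIncrementDistribution}) to the one-period loss distribution~(\ref{eq:singleLoss}) applied to a smaller system with shifted parameters. The essential observation is that the single-sector joint distribution~(\ref{eq:singleProb}) is a mixture of two product distributions (as already made explicit by Proposition~\ref{lem:binomialdec}), so conditioning on a subset of firms being in default preserves this mixture structure, only reweighting the mixing coefficients.

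First, I would write out the joint probability $\pr{X_1=\cdots=X_m=1,\,X_{m+1}=y_1,\ldots,X_{N_{t_k}}=y_{N_{t_k}-m}}$ using the form~(\ref{eq:singleProb}) (with $N$ replaced by $N_{t_k}$). Let $\sigma:=\sum_j y_j$. The numerator becomes
\[
\frac{1}{Z_1}\Bigl(e^{\eta_F(m+\sigma)}+e^{\eta_S+(\eta_F+\eta_{FS})(m+\sigma)}\Bigr)
= \frac{e^{m\eta_F}}{Z_1}\Bigl(e^{\eta_F\sigma}+e^{\eta_S+m\eta_{FS}}e^{(\eta_F+\eta_{FS})\sigma}\Bigr).
\]
The key factorization has appeared: the bracket on the right is precisely the unnormalized single-sector joint weight for the $N_{t_k}-m$ surviving nondefaulted firms, but with the sector parameter $\eta_S$ replaced by $\eta_S+m\eta_{FS}$, while $\eta_F$ and $\eta_{FS}$ are unchanged.

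Next I would compute the conditioning denominator $\pr{X_1=\cdots=X_m=1}$ by marginalizing out $y_1,\ldots,y_{N_{t_k}-m}$ from the above expression. The common factor $e^{m\eta_F}/Z_1$ pulls out, and the remaining sum evaluates via the binomial theorem to $(1+e^{\eta_F})^{N_{t_k}-m}+e^{\eta_S+m\eta_{FS}}(1+e^{\eta_F+\eta_{FS}})^{N_{t_k}-m}$, which is exactly the partition function $Z_{N_{t_k}-m,\,\eta_S+m\eta_{FS},\,\eta_{FS},\,\eta_F}$ of~(\ref{eq:singleZ}). Taking the ratio cancels both $e^{m\eta_F}$ and $Z_1$, leaving a clean one-period distribution on the surviving $N_{t_k}-m$ firms with parameters $(\eta_S+m\eta_{FS},\,\eta_{FS},\,\eta_F)$.

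Finally, summing over all index choices $i_1,\ldots,i_n\in\{m+1,\ldots,N_{t_k}\}$ and all configurations with exactly $n$ of the $N_{t_k}-m$ remaining firms defaulting contributes the factor $\binom{N_{t_k}-m}{n}$, yielding precisely the form of~(\ref{eq:singleLoss}) evaluated at $(n;\,N_{t_k}-m,\,\eta_S+m\eta_{FS},\,\eta_{FS},\,\eta_F)$, which is by definition $\tildepr{n;N_{t_k}-m,\eta_S+m\eta_{FS},\eta_{FS},\eta_F}$. The only thing to verify carefully is that the definition~(\ref{eq:lossIncrementDistribution}) indeed corresponds to conditioning on $m$ specific firms being in default and summing over which $n$ of the remaining $N_{t_k}-m$ default; by exchangeability of the firm nodes in the homogeneous one-sector model this is unambiguous, and no real obstacle arises beyond bookkeeping of the factorizations.
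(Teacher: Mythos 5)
Your factorization argument is correct: pulling out $e^{m\eta_F}$ from the mixture form of (\ref{eq:singleProb}) shows that conditioning on $m$ firms being in default reproduces the same one-period model on the remaining $N_{t_k}-m$ firms with $\eta_S$ shifted to $\eta_S+m\eta_{FS}$, and summing over which $n$ of them default gives the binomial factor, which is precisely the ``simple calculation'' the paper invokes without writing out. This is essentially the paper's own (implicit) route, so no further comparison is needed beyond the indexing bookkeeping you already flag.
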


\subsubsection{Simulation and CDO pricing}

Based on the above proposition, it is easy to see that
this multi-period model can be simulated as follows. At
time $0$, the graph has $N$ non-defaulted firms, At time $t_1$,
no removal of nodes occurs since none of the firms were in default
at time $0$. The number of firms that default during period
$(0,t_1)$, $D_1 - D_0 = D_1$ is determined by sampling from
Equation~(\ref{eq:singleLoss}). At time $t_2$, each of the
$D_1$ nodes is removed from the graph with probability $p_R$. The
additional number of defaults $D_2 - D_1$ is determined by
Equation~(\ref{eq:lossIncrementDistribution}), where $N_2$ is the
total number of firms remaining after the removal, and
${I}_{t_2}$ is the number of firms among $D_1$ that have not been
removed from the graph. Continue in this fashion until $K$ periods
are covered.

Moreover, the homogeneity assumptions for the one-period
single-sector model imply that our model can be
perceived as a two-state discrete time Markov chain, as
$(D_{t_{k+1}},N_{t_{k+1}})$ only depends on $(D_{t_k}, N_{t_k})$.
Indeed, the transition matrix $P$ of the Markov chain is given by:
\begin{eqnarray}
    P_{(D_{t_k},N_{t_k})\rightarrow(D_{t_{k+1}},N_{t_{k+1}})} &=& { I_{t_k} \choose N_{t+k} - N_{t_{k+1}}} p_R^{N_{t_k} -
    N_{t_{k+1}}} {(1-p_R)}^{D_{t_k}-N+N_{t_{k+1}}} \nonumber \\
    && \tilde{P}(D_{t_{k+1}}-D_{t_k};N_{t_k}- {I}_{t_k},\eta_S + {I}_{t_k} \eta_{FS}, \eta_{FS}, \eta_F) \label{eq:transitionMatrix} \\
    && N \ge D_{t_{k+1}} \ge D_{t_k} \ge 0 \textrm{ and } N \ge N_{t_k} \ge N_{t_{k+1}} \ge 0 \nonumber
\end{eqnarray}
with $D_0=0, N_0=N$.
This Markov chain formulation is useful for analytical
calculation of loss distribution and CDO prices.

For purposes of model comparison in the later sections, we briefly
discuss pricing of CDOs in our model.

\paragraph{Collateralized Debt Obligation(CDO) Pricing}

A Collateralized Debt Obligation (CDO) is a portfolio of
defaultable instruments (loans, credits, bonds or default swaps),
whose credit risk is sold to investors who agree to bear the
losses in the portfolio, in return for a periodic payment. A CDO is
usually sold in {\em tranches}, which are specified by their {\em
attachment point}s $K_L$ and {\em detachment point}s $K_U$ as a
percentage of total notional of the portfolio. The
holder of a tranche is responsible for covering all losses in
excess of $K_L$ percent of the notional, up to $K_U$ percent. In
return, the premiums he receives are adjusted according to the
remaining notional he is responsible for. In the case of
popularly traded tranches on the North American Investment Grade
Credit Default Swap Index (CDX.NA.IG), the tranches are named {\em
equity, mezzanine, senior, senior, super-senior} with attachment
and detachment points of $0-3, 3-7, 7-10, 10-15, 15-30$
respectively.

Given an
underlying portfolio, and fixed attachment and detachment points
for all tranches, the pricing problem is the determination of
periodic payment percentages (usually called {\em spread}s)
$s_l$ for all tranches,
assuming the market is complete and
default-free interest rate is independent of the credit risk of
securities in the portfolio.

If we denote the total notional of the portfolio by $M$, the periodic payment dates by $t_1, \dots, t_K$, the date of
inception of the contract by $t_0:=0$, payment period $t_{k+1}-t_k$ by $\gamma$,
the total percentage of loss in the portfolio by time $t_k$ by $C_{t_k}$, the attachment (detachment) point for tranche $l$ by
$K_{L_l}$ ($K_{U_l}$), and the discount factor from $t_0$ to $t_k$ by $\beta(t_0, t_k)$, then it is clear that
specifying the distribution for $C_{t_k}$ for $k=1,\dots,K$ is sufficient
for pricing purposes.

To see this, note the percentage
of loss $C_{l,t}$ suffered by the holders of tranche $l$ up to time $t$ is given by:
\begin{eqnarray}
    C_{l,t} &:=& \min\cbr{C_{t}, K_{U_l}} - \min\cbr{C_{t}, K_{L_l}}.
    \label{eq:trancheLoss}
\end{eqnarray}
Consequently, the value at time $t_0$ of payments received by the holder of tranche $l$ is
\begin{eqnarray}
    &&\sum_{k=1}^K \beta(t_0,t_k) s_l \gamma M \ex{K_{U_l} - K_{L_l} - C_{l,t_k}}.
    \label{eq:trancheFixed}
\end{eqnarray}
Similarly, the value at time $t_0$ of payments made by the holder of tranche $l$ is given by
\begin{eqnarray}
    &&\sum_{k=1}^K \beta(t_0,t_k) M \ex{C_{l,t_k}-C_{l,t_{k-1}}}.
    \label{eq:trancheFloating}
\end{eqnarray}
In order to prevent arbitrage, the premium $s_l$ needs to be chosen such that the value of payments received is equal to
the value of payments made. Therefore,
\begin{eqnarray}
    s_l &=& \frac{\sum_{k=1}^K \beta(t_0,t_k) \left( \ex{C_{l,t_k}} - \ex{C_{l,t_{k-1}}}\right)}{
    \sum_{k=1}^K \beta(t_0,t_k) \gamma \left(K_{U_l} - K_{L_l} - \ex{C_{l,t_k}}\right)}.
    \label{eq:trancheSpread}
\end{eqnarray}

Now our focus is to calculate the distribution for $C_{t_k}$ in our multi-period model.
Denoting the $k$-step transition matrix (in
Equation~(\ref{eq:transitionMatrix})) for the Markov chain with $P^k$ and the
number of losses at the $k$-th step by $L^k$, then
\begin{equation}
\pr{C_{t_k} = \frac{m}{N}} = \pr{L^k = m} = \sum_{n=0}^N
P^k_{(0,N)\rightarrow(m,n)}, \label{eq:multiPeriodLossPercentage}
\end{equation}
and the spreads are given by
\begin{prop}
Given the yield curve $\beta$, attachment ($K_{L_l}$) and detachment  ($K_{U_l}$) points, and the implied Markov transition matrix $P$, the spread of tranche $l$ is
\begin{eqnarray*}
    s_l &=& \frac{\sum_{k=1}^K \beta(t_0,t_k) \left(
    \sum_{m = 0}^N \left(
    \min\cbr{\frac{m}{N}, K_{U_l}} - \min\cbr{\frac{m}{N}, K_{L_l}} \right)
    \sum_{n=0}^N P^k_{(0,N)\rightarrow(m,n)} - P^{k-1}_{(0,N)\rightarrow(m,n)}
    \right)}{
    \sum_{k=1}^K \beta(t_0,t_k) \gamma \left(K_{U_l} - K_{L_l} - \sum_{m = 0}^N \left(
    \min\cbr{\frac{m}{N}, K_{U_l}} -  \min\cbr{\frac{m}{N}, K_{L_l}} \right)  \sum_{n=0}^N P^k_{(0,N)\rightarrow(m,n)} \right)}.
\end{eqnarray*}
\end{prop}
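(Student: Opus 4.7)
The plan is to derive the closed-form spread expression by substituting the previously established loss distribution of the Markov chain into the no-arbitrage pricing identity~(\ref{eq:trancheSpread}). No new probabilistic input is needed beyond what has already been set up: the proof is an evaluation of expectations, and the Markov structure of Equation~(\ref{eq:transitionMatrix}) enters only to guarantee that the $k$-step transition matrix $P^k$ is well-defined so that~(\ref{eq:multiPeriodLossPercentage}) applies.

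First, I would observe that in~(\ref{eq:trancheSpread}) the only unknown quantities are the expectations $\ex{C_{l,t_k}}$ for $k=0,1,\ldots,K$. Since the chain starts deterministically in state $(0,N)$, we have $C_{t_0}=0$ and therefore $\ex{C_{l,t_0}}=0$ by the definition~(\ref{eq:trancheLoss}), so the problem reduces to computing $\ex{C_{l,t_k}}$ for $k\ge 1$.

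Next, because the portfolio loss percentage $C_{t_k}$ is supported on the finite grid $\{0,1/N,\ldots,1\}$, I would apply~(\ref{eq:trancheLoss}) pointwise and sum:
\begin{equation*}
  \ex{C_{l,t_k}} \,=\, \sum_{m=0}^{N}\Bigl(\min\{m/N,K_{U_l}\}-\min\{m/N,K_{L_l}\}\Bigr)\,\pr{C_{t_k}=m/N}.
\end{equation*}
By~(\ref{eq:multiPeriodLossPercentage}), each of the probabilities equals $\sum_{n=0}^{N} P^{k}_{(0,N)\rightarrow(m,n)}$, where the sum marginalizes the Markov transition over the surviving graph size $n$. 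Substituting this into the display above produces an explicit formula for $\ex{C_{l,t_k}}$ in terms of $P^{k}$.

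Finally, I would plug these expressions into the numerator and denominator of~(\ref{eq:trancheSpread}). Crucially, the tranche kernel $\min\{m/N,K_{U_l}\}-\min\{m/N,K_{L_l}\}$ does not depend on $k$, so the numerator difference $\ex{C_{l,t_k}}-\ex{C_{l,t_{k-1}}}$ collapses into the same kernel weighted by the difference of transition marginals $\sum_{n=0}^{N}\bigl(P^{k}_{(0,N)\rightarrow(m,n)}-P^{k-1}_{(0,N)\rightarrow(m,n)}\bigr)$. This yields exactly the stated formula for $s_l$. The only real obstacle is notational bookkeeping — keeping the running index $n$ in the transition marginal disjoint from the default-count index $m$, and making sure the telescoping in the numerator is assembled from the correct pair of adjacent powers $P^{k-1}$ and $P^{k}$. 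There is no genuine mathematical difficulty beyond assembling these previously derived identities.
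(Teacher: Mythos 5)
Your proposal is correct and matches the paper's (implicit) argument: the proposition is obtained exactly by evaluating $\ex{C_{l,t_k}}$ over the finite loss grid via Equation~(\ref{eq:multiPeriodLossPercentage}) and substituting into the no-arbitrage spread formula~(\ref{eq:trancheSpread}), which is precisely the substitution you describe. Your added remarks on $\ex{C_{l,t_0}}=0$ and the telescoping of adjacent powers $P^{k-1},P^{k}$ are consistent with the stated formula, so nothing is missing.
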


\section{Sensitivity Analysis and Comparison with One-Factor Normal Copula}
\label{sec:analysis_comparison}

In this section, some numerical results on the proposed
model are reported, and they are compared with both the static(one-period)
and dynamic(multi-period) one-factor normal
copula model. Throughout the section, $S=1$ is assumed
for simplicity.

\subsection{Static Characteristics}

\paragraph{Correlation for single-period model}

First, we analyze the effects of the parameterization triplets
$(\eta_S,\eta_F,\eta_{FS})$ on the correlation between two firms
$corr(X_i,X_j)$. Note the following statement
\begin{prop}
    \label{thm:singleFindingMarginal}
    Given fixed $\eta_S, \eta_{FS}$ and $q \in (0,1)$, the $\eta_F^*$ value that gives $\pr{X_1=1}=q$ in
    Equation~(\ref{eq:singleProb}) is given by
    \begin{displaymath}
        g(e^{\eta_F^*}) + e^{\eta_S} g\left( e^{\eta_{FS}} e^{\eta_F^*} \right) =
        0,
    \end{displaymath}
    where
    \begin{displaymath}
        g(y) = \left( 1 - \frac{1-q}{q}y \right)\left( 1 + y
        \right)^{N-1}.
    \end{displaymath}
\end{prop}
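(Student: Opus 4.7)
The plan is to reduce the proposition to a direct algebraic identity by computing $P(X_1 = 1)$ explicitly from Equation~(\ref{eq:singleProb}), setting it equal to $q$, and then massaging the resulting equation into the claimed form.

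First I would marginalize. Since the joint law in~(\ref{eq:singleProb}) depends on the $x_i$ only through $\sum_i x_i$, I can sum over $x_2,\dots,x_N \in \{0,1\}$ with $x_1=1$ fixed. Each of the two exponential summands factors over the remaining coordinates, giving
\[
  P(X_1=1) \;=\; \frac{1}{Z_1}\Bigl(e^{\eta_F}(1+e^{\eta_F})^{N-1} + e^{\eta_S+\eta_{FS}+\eta_F}(1+e^{\eta_F+\eta_{FS}})^{N-1}\Bigr),
\]
with $Z_1$ given by Equation~(\ref{eq:singleZ}). Setting this equal to $q$ and clearing $Z_1$, I obtain
\[
  e^{\eta_F}(1+e^{\eta_F})^{N-1} + e^{\eta_S+\eta_{FS}+\eta_F}(1+e^{\eta_F+\eta_{FS}})^{N-1}
  \,=\, q\Bigl[(1+e^{\eta_F})^N + e^{\eta_S}(1+e^{\eta_F+\eta_{FS}})^N\Bigr].
\]

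Next I would introduce the substitutions $y := e^{\eta_F}$ and $z := e^{\eta_F+\eta_{FS}} = y\,e^{\eta_{FS}}$, then regroup the two sides so that each $(1+y)^{N-1}$ and $(1+z)^{N-1}$ factor is paired with its linear coefficient. Moving everything to one side yields
\[
  (1+y)^{N-1}\bigl[(1-q)y - q\bigr] \;+\; e^{\eta_S}(1+z)^{N-1}\bigl[(1-q)z - q\bigr] \;=\; 0.
\]
Dividing by $-q$ (valid since $q \in (0,1)$) and recognizing that
\[
  \left(1 - \tfrac{1-q}{q}y\right)(1+y)^{N-1} \;=\; g(y),
\]
with the analogous identity for $z$, converts the equation into $g(y) + e^{\eta_S} g(z) = 0$, which is exactly the claimed characterization once one substitutes back $y = e^{\eta_F^*}$ and $z = e^{\eta_{FS}} e^{\eta_F^*}$.

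There is no real obstacle here: every step is a routine but careful rearrangement, and the main thing to be mindful of is the bookkeeping between the exponents $N$ and $N-1$ when pulling a common factor out of $(1+y)^N = (1+y)\cdot(1+y)^{N-1}$, and the sign management when dividing through by $-q$. Existence and uniqueness of the root $\eta_F^*$ (implicitly asserted by the proposition's phrasing) can be noted as a consequence of Theorem~\ref{prop:representation} applied to the homogeneous one-sector specialization, so no separate monotonicity argument is required.
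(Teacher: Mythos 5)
Your derivation is correct and is essentially the paper's own argument: the paper computes $\pr{X_1=1}$ in closed form (by conditioning on the sector node $X_{N+1}$, which produces exactly your two-term expression $\frac{1}{Z_1}\bigl(e^{\eta_F^*}(1+e^{\eta_F^*})^{N-1}+e^{\eta_S+\eta_{FS}+\eta_F^*}(1+e^{\eta_F^*+\eta_{FS}})^{N-1}\bigr)$), eliminates the partition function (via the ratio $(1-q)/q$ rather than substituting Equation~(\ref{eq:singleZ}), an immaterial difference), and regroups to reach $g(e^{\eta_F^*})+e^{\eta_S}g(e^{\eta_{FS}}e^{\eta_F^*})=0$ just as you do. One small caveat on your closing aside: Theorem~\ref{prop:representation} does not directly yield existence or uniqueness of $\eta_F^*$ with $\eta_S,\eta_{FS}$ held fixed (it concerns matching the full set of marginals with all parameters free), and the paper instead follows the proof with a sign analysis of $g$ locating the root; since the proposition only asserts the defining equation, this does not affect the validity of your main argument.
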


Take $q=0.01,0.05$ and $N=125$.
Figures~\ref{fig:correlation_fixed_marginal_np}--\ref{fig:correlation_fixed_marginal_pn} show the
correlation values for the four quadrants on $\eta_S$ and $\eta_{FS}$. For each point
Proposition~\ref{thm:singleFindingMarginal} is utilized for calculating the $\eta_F^*$ value
that achieves the desired $q$ value.
Note that $\eta_{FS}$ is the dominant parameter when
$\eta_{FS}$ values are close to $0$. As $\eta_{FS}$ moves away
from $0$, $\eta_S$'s effect increases. Also note that it is
possible to obtain high degrees of linear correlation levels even for $q=0.01$.
This extends the abilities of multi-firm extensions of intensity-based models in literature
(see \cite[\S 10.5]{Schonbucher:03}  for a discussion).

\setkeys{Gin}{width=0.4\textwidth}

 \setkeys{Gin}{width=0.4\textwidth}
\begin{figure}[!h]
    \centering
    \subfigure[q=$0.01$] {
    \label{fig:correlation_1_np}
\includegraphics{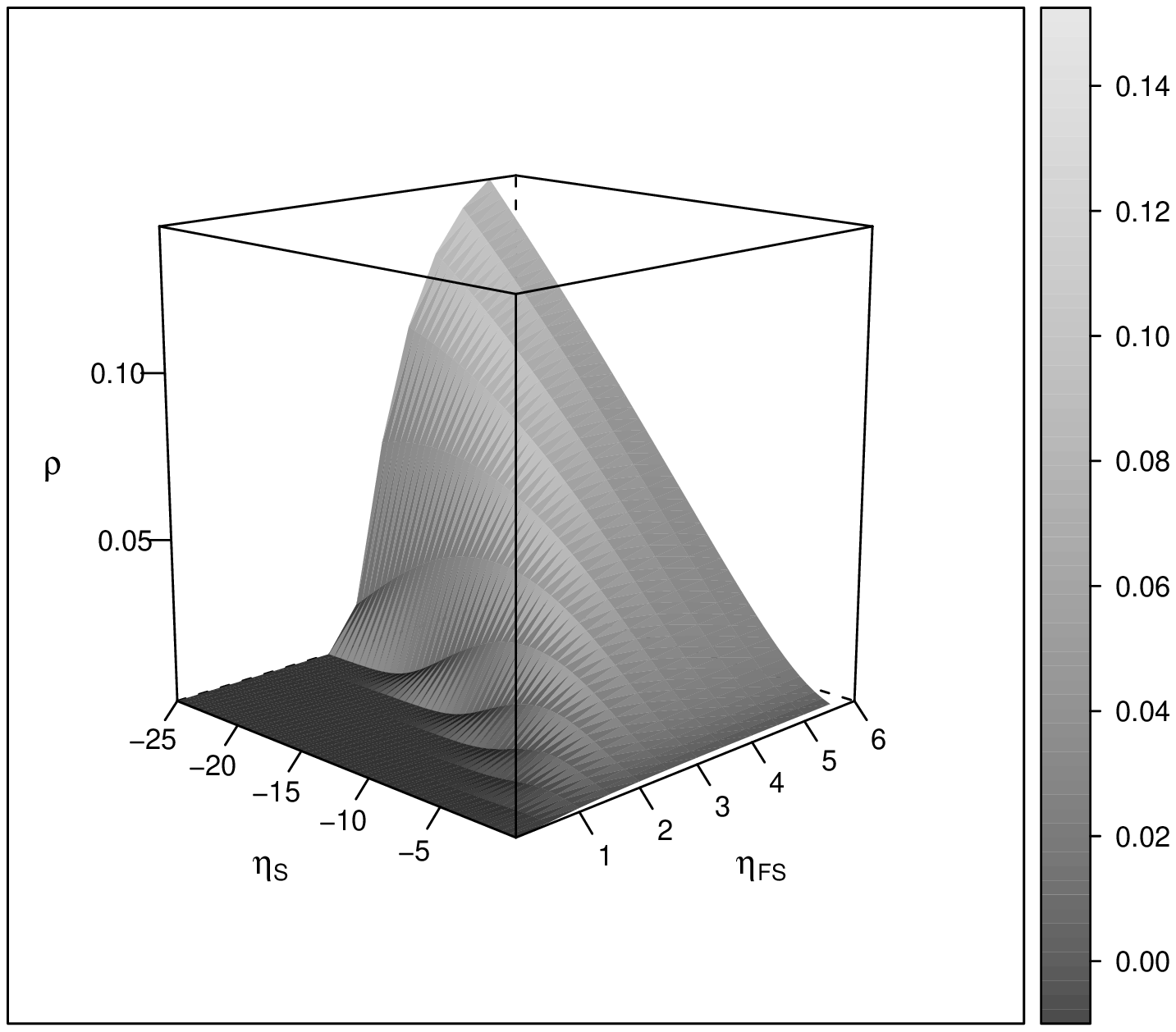}
    }
    \hspace{0.5in}
    \subfigure[q=$0.05$] {
        \label{fig:correlation_5_np}
\includegraphics{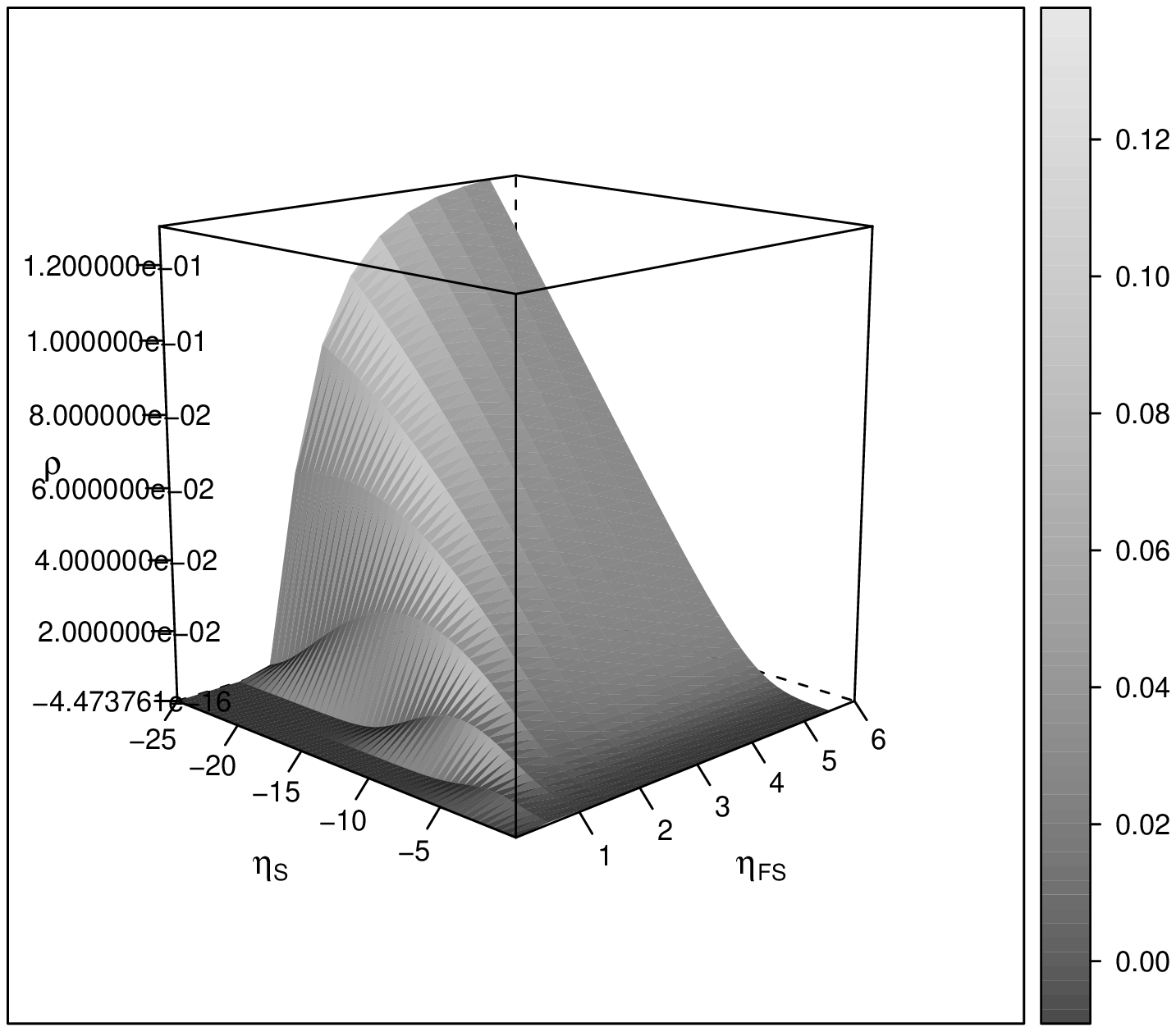}
    }
    \caption{Variation of $\rho$ for fixed marginal default
    probabilities, $\eta_S < 0, \eta_{FS} > 0$}
    \label{fig:correlation_fixed_marginal_np}
\end{figure}
\setkeys{Gin}{width=0.8\textwidth}

 \setkeys{Gin}{width=0.4\textwidth}
\begin{figure}[!h]
    \centering
    \subfigure[q=$0.01$] {
    \label{fig:correlation_1_nn}
\includegraphics{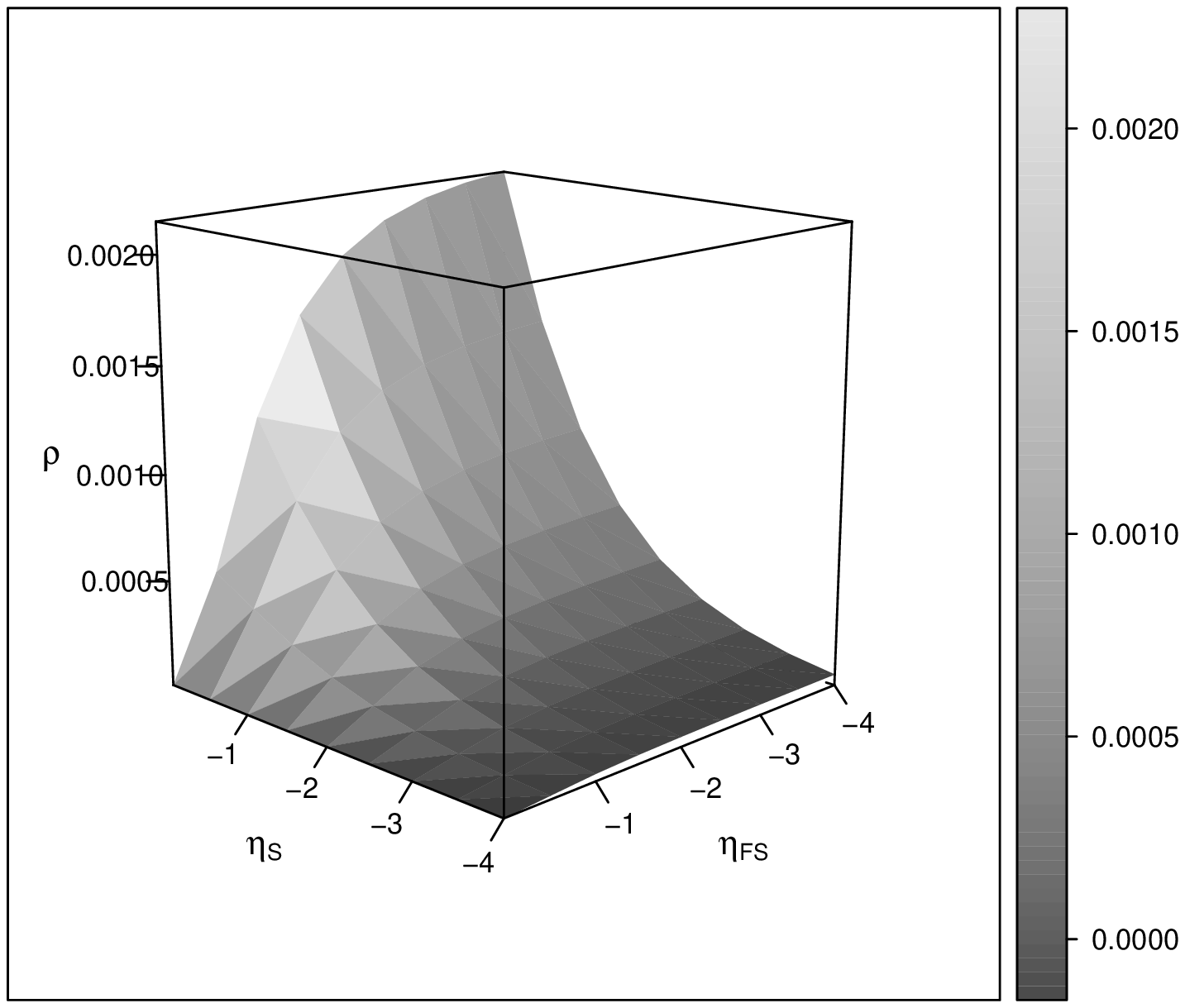}
    }
    \hspace{0.5in}
    \subfigure[q=$0.05$] {
        \label{fig:correlation_5_nn}
\includegraphics{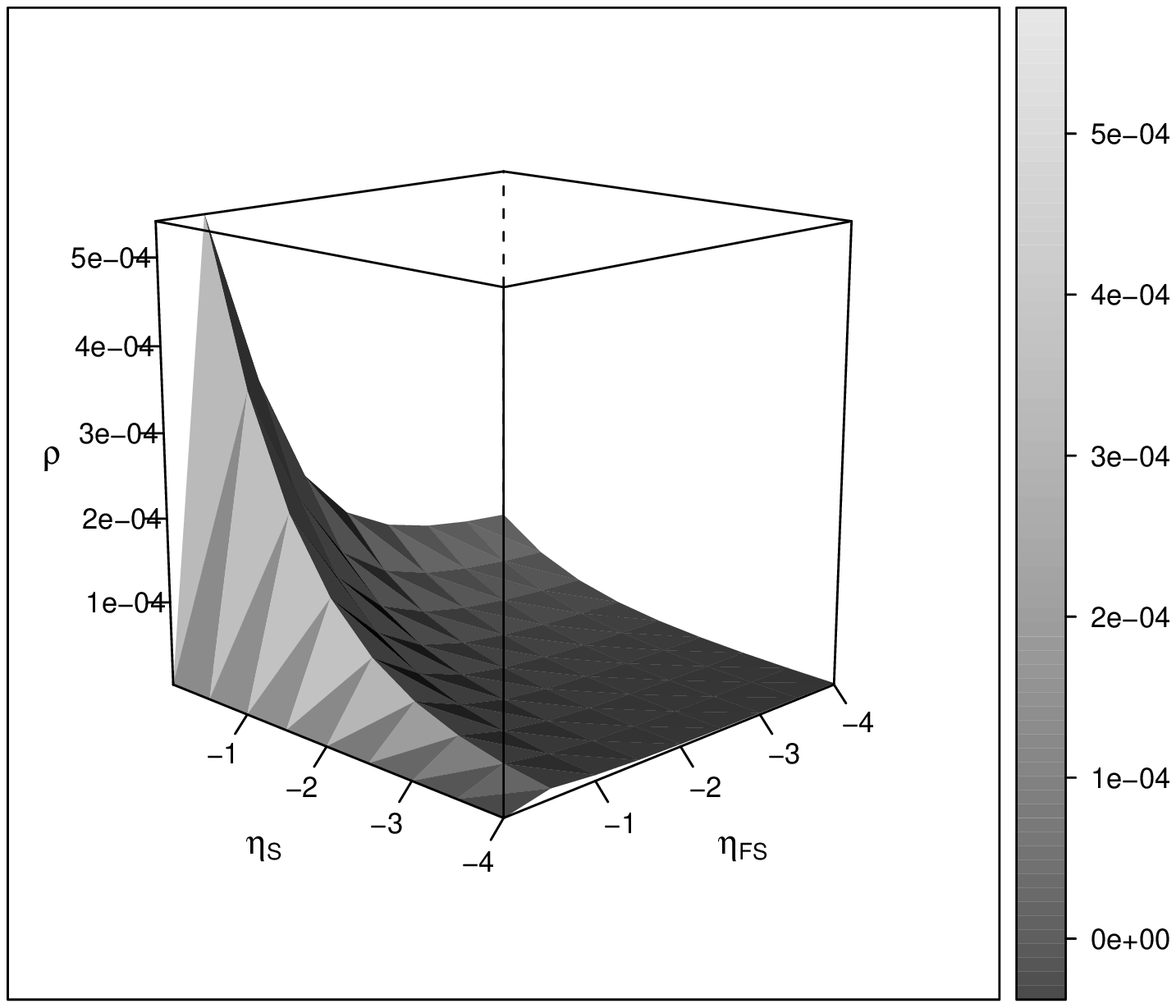}
    }
    \caption{Variation of $\rho$ for fixed marginal default
    probabilities, $\eta_S < 0, \eta_{FS} < 0$}
    \label{fig:correlation_fixed_marginal_nn}
\end{figure}
\setkeys{Gin}{width=0.8\textwidth}

 \setkeys{Gin}{width=0.4\textwidth}
\begin{figure}[!h]
    \centering
    \subfigure[q=$0.01$] {
    \label{fig:correlation_1_pp}
\includegraphics{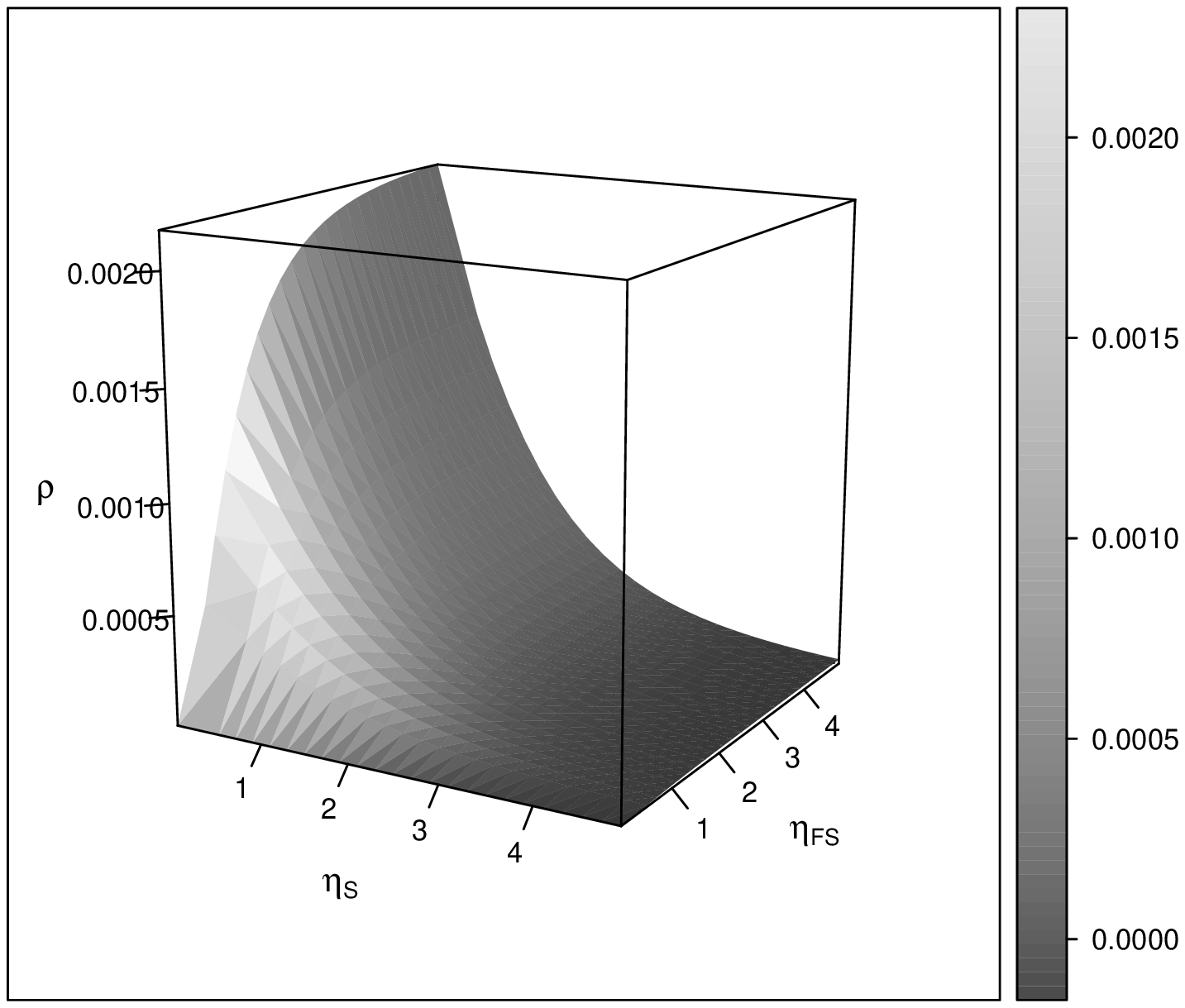}
    }
    \hspace{0.5in}
    \subfigure[q=$0.05$] {
        \label{fig:correlation_5_pp}
\includegraphics{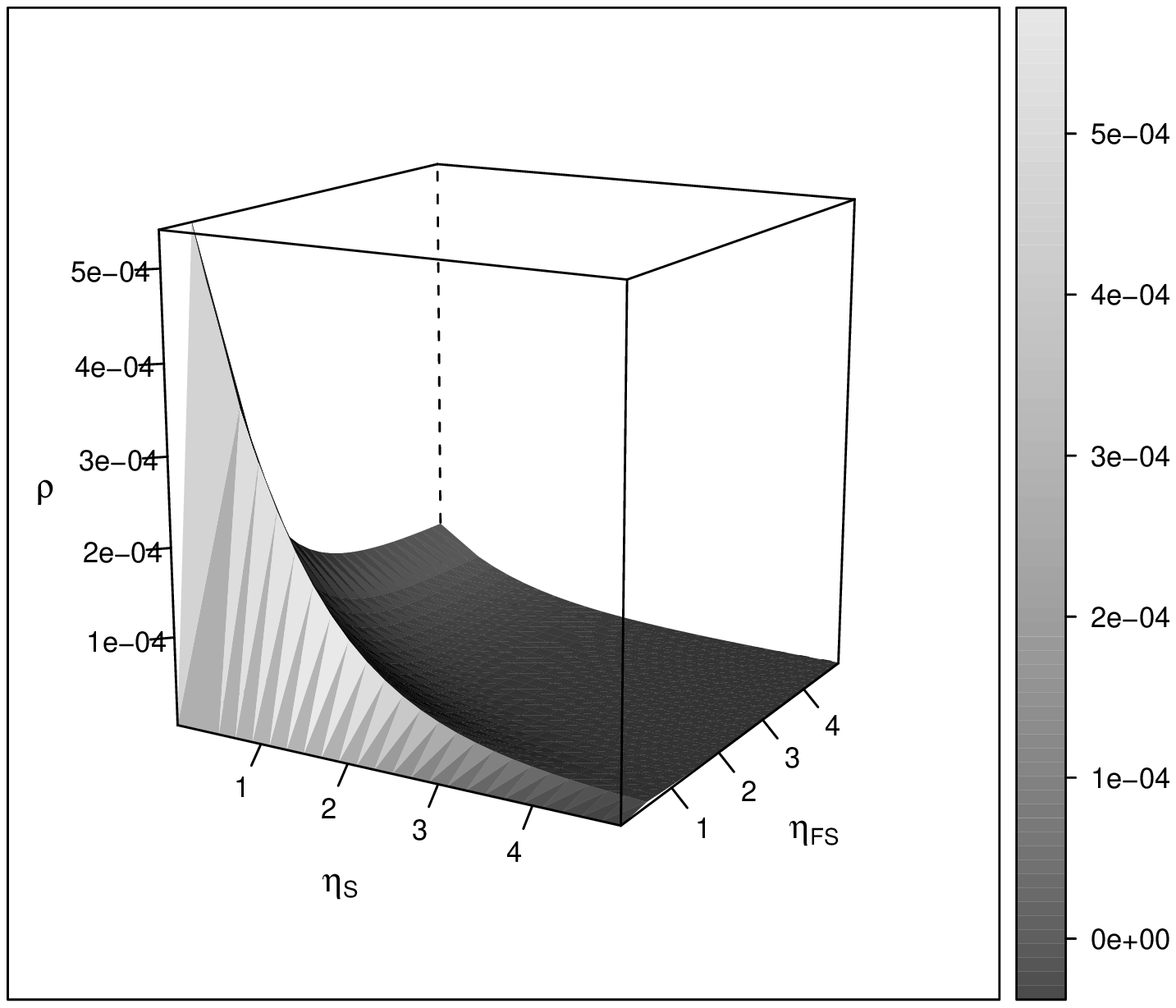}
    }
    \caption{Variation of $\rho$ for fixed marginal default
    probabilities, $\eta_S > 0, \eta_{FS} > 0$}
    \label{fig:correlation_fixed_marginal_pp}
\end{figure}
\setkeys{Gin}{width=0.8\textwidth}

 \setkeys{Gin}{width=0.4\textwidth}
\begin{figure}[!h]
    \centering
    \subfigure[q=$0.01$] {
    \label{fig:correlation_1_pn}
\includegraphics{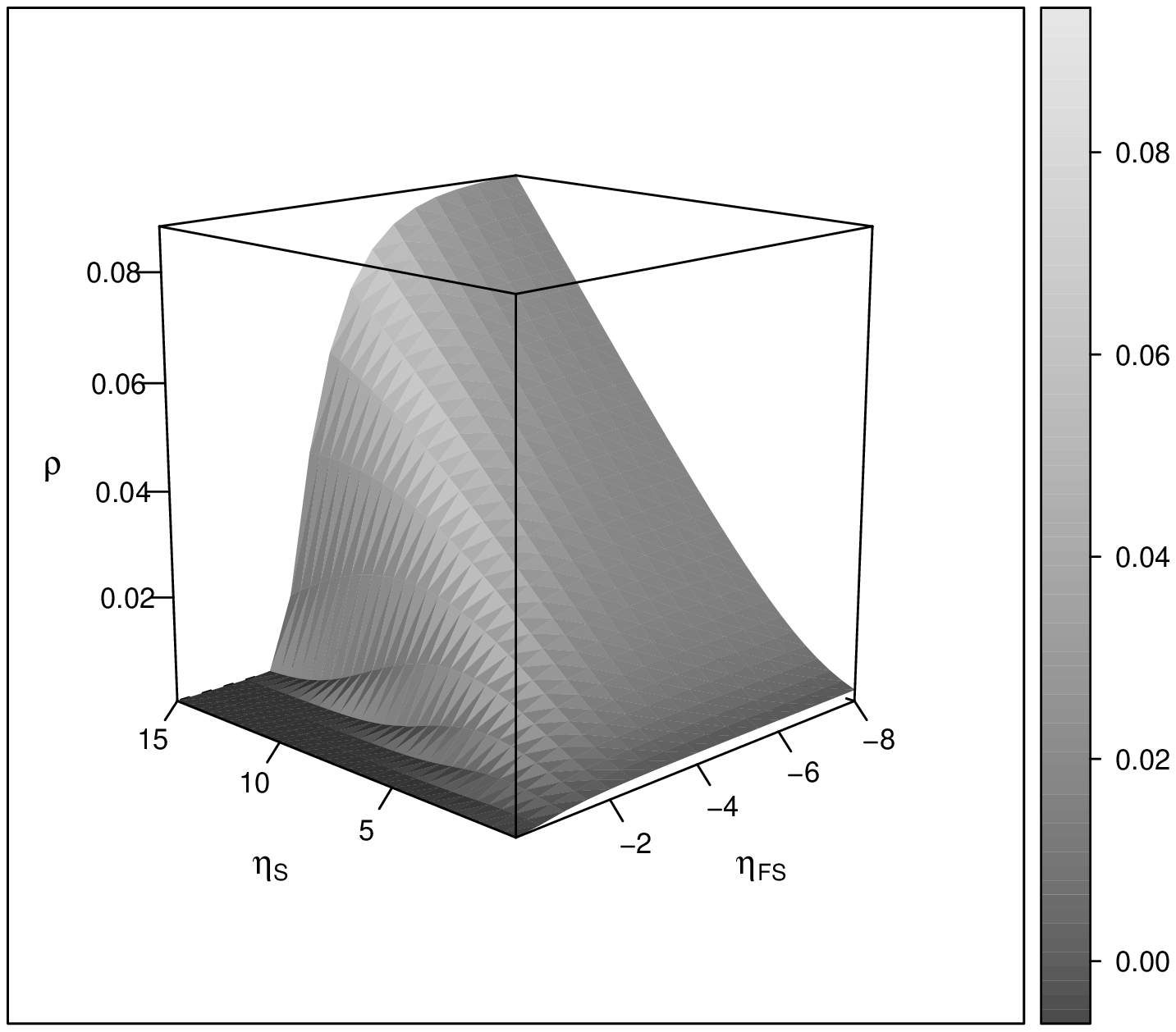}
    }
    \hspace{0.5in}
    \subfigure[q=$0.05$] {
        \label{fig:correlation_5_pn}
\includegraphics{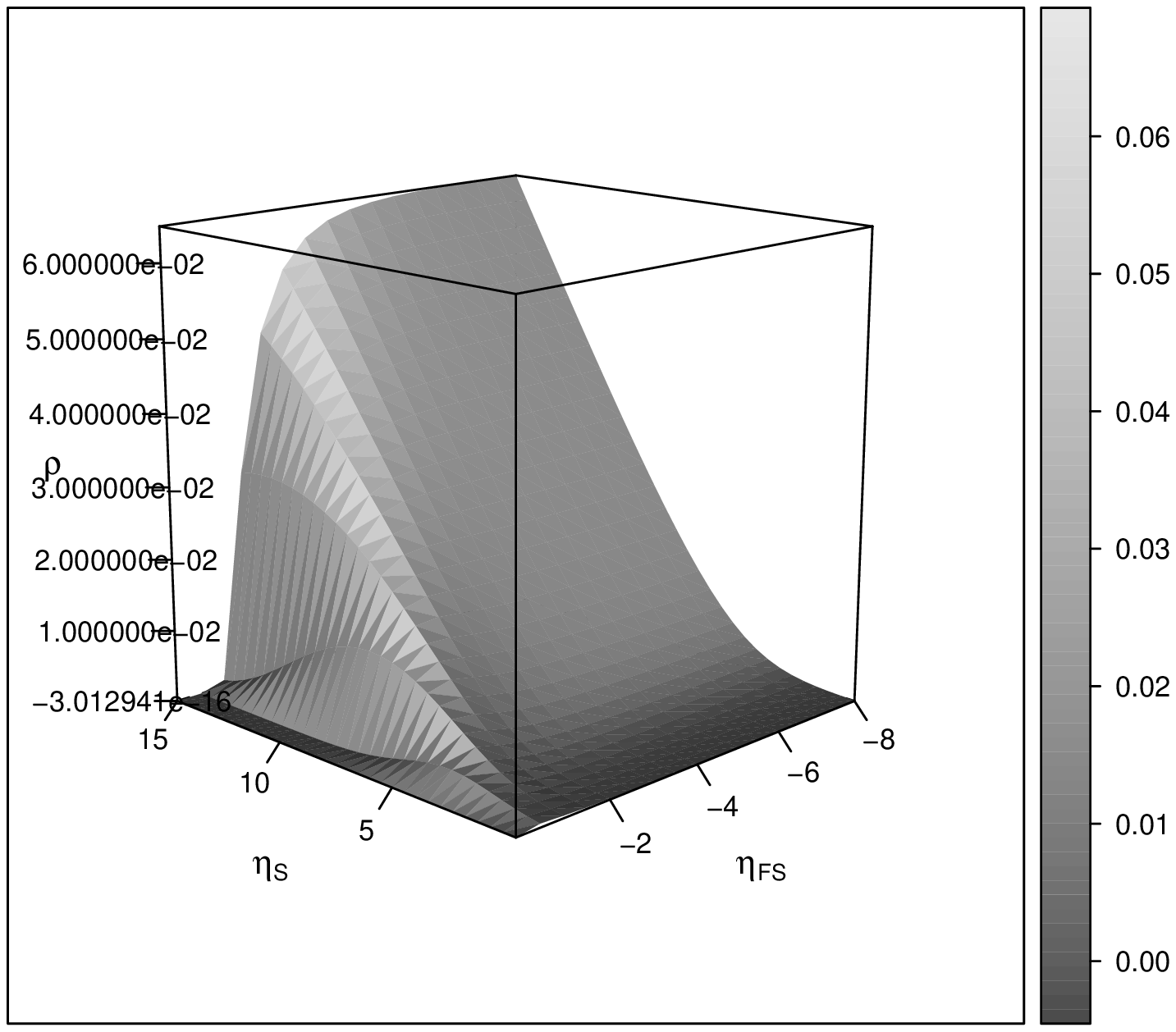}
    }
    \caption{Variation of $\rho$ for fixed marginal default
    probabilities, $\eta_S > 0, \eta_{FS} < 0$}
    \label{fig:correlation_fixed_marginal_pn}
\end{figure}
\setkeys{Gin}{width=0.8\textwidth}

\paragraph{Loss distribution for single-period model}

Figure~\ref{fig:singlesecloss} exhibits the shape and fat tails of
the loss distribution for different parameters. Here $N=$125,
$\eta_{FS}=$~-2.1, $\pr{X_i}=$0.05, with $\eta_S$ calibrated to match
the given correlation level $\rho$ and $\eta_F$ to match the given
marginal default probability. The figure shows the loss
distribution for correlation levels 0.01, 0.02, 0.05, 0.07. As
expected, the mass shifts towards the tail as correlation
increases. All the distributions are bimodal, which facilitates
having significantly fat tails.

\begin{figure}[h]
    \center
\includegraphics{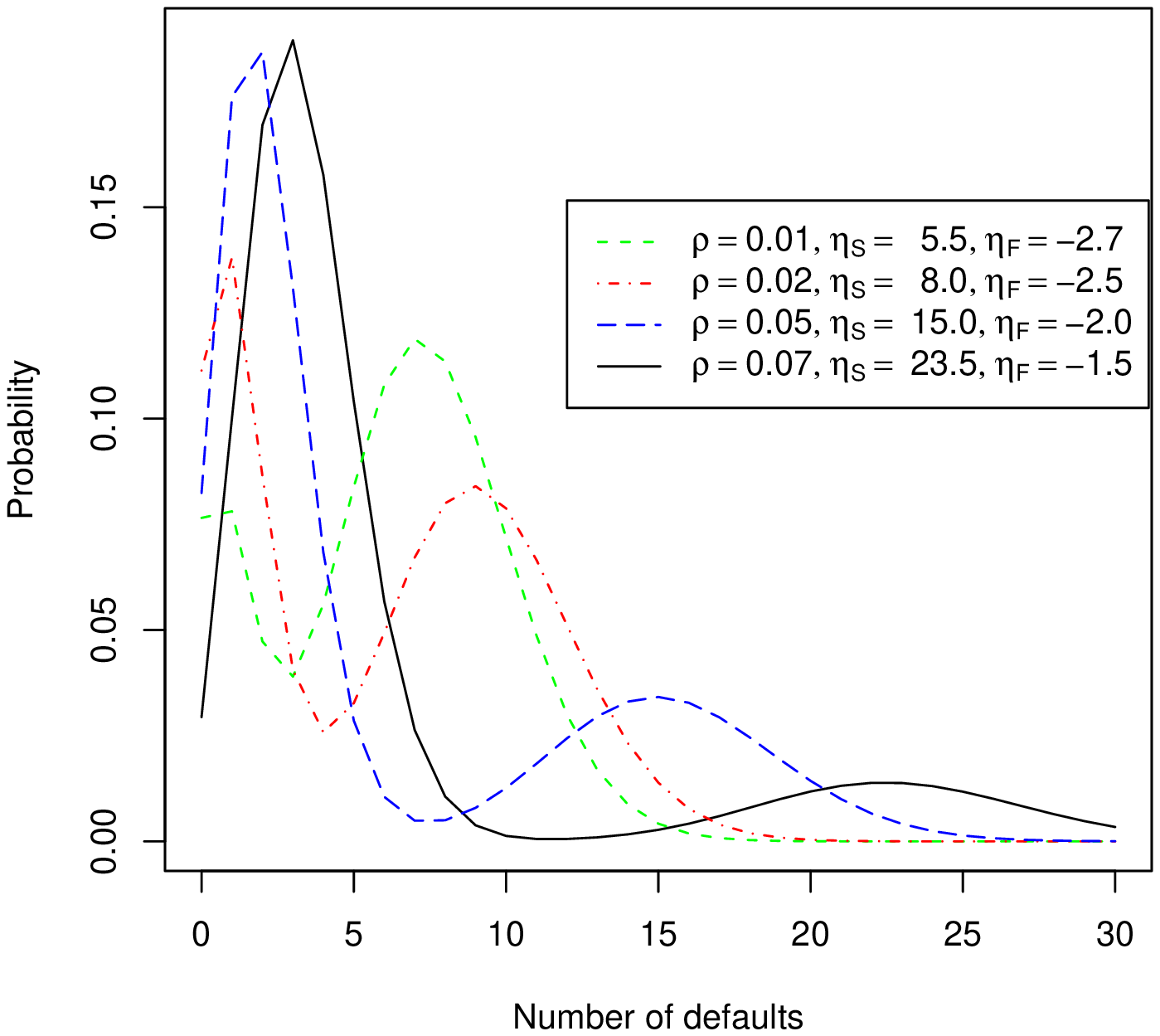}
    \caption{Distribution of number of defaults under different correlations, $\eta_{FS}=$-2.1, $\eta_S,\eta_F$ varying}
    \label{fig:singlesecloss}
\end{figure}

\paragraph{Heavy Tails for Multi-period Model}

\setkeys{Gin}{width=0.45\textwidth}
\begin{figure}[!h]
    \centering
    \subfigure[Single step loss distribution] {
\includegraphics{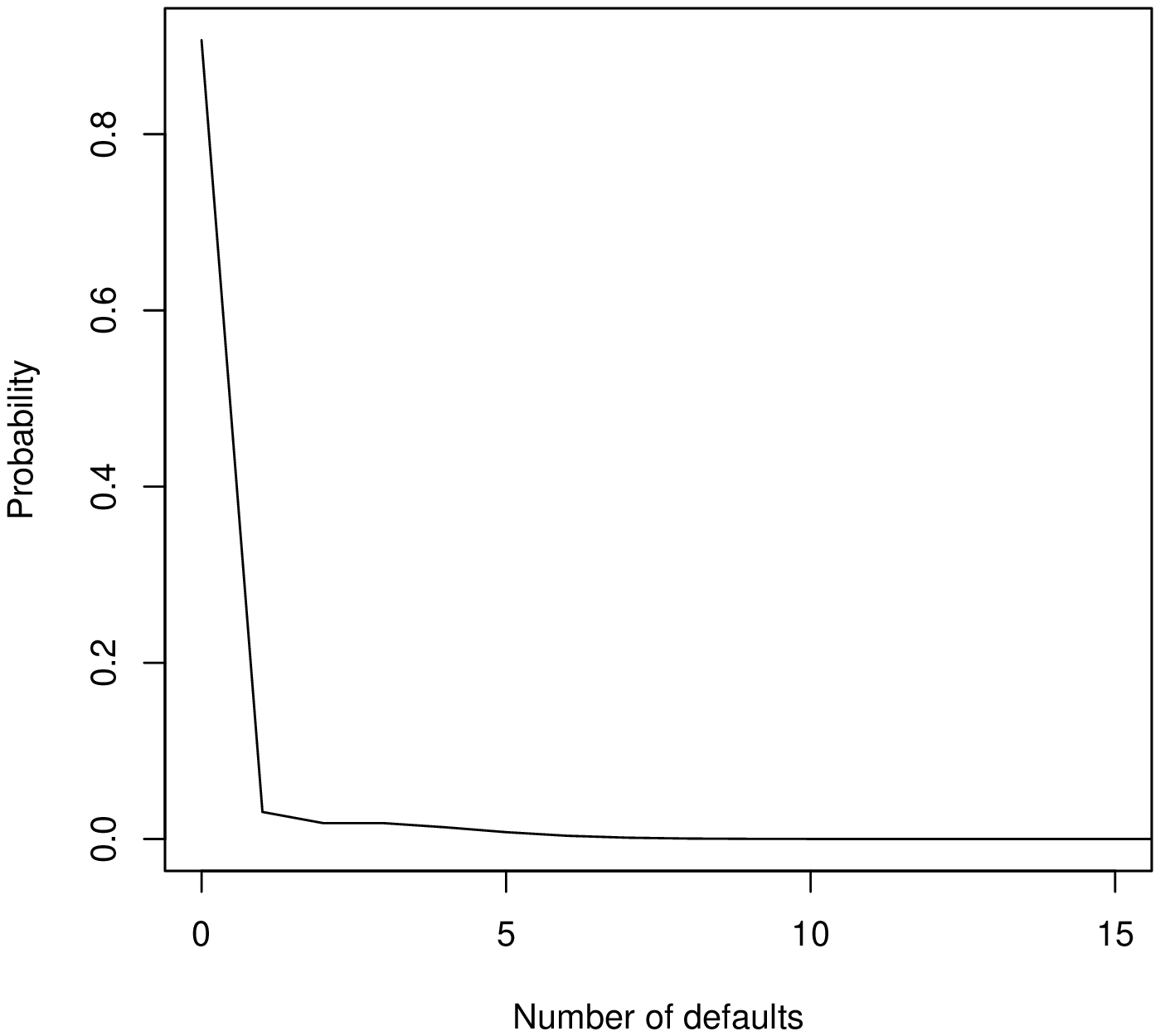}
    }
    \hspace{1in}
    \subfigure[5 steps] {
        \label{fig:wr_initial_plotter_step5}
\includegraphics{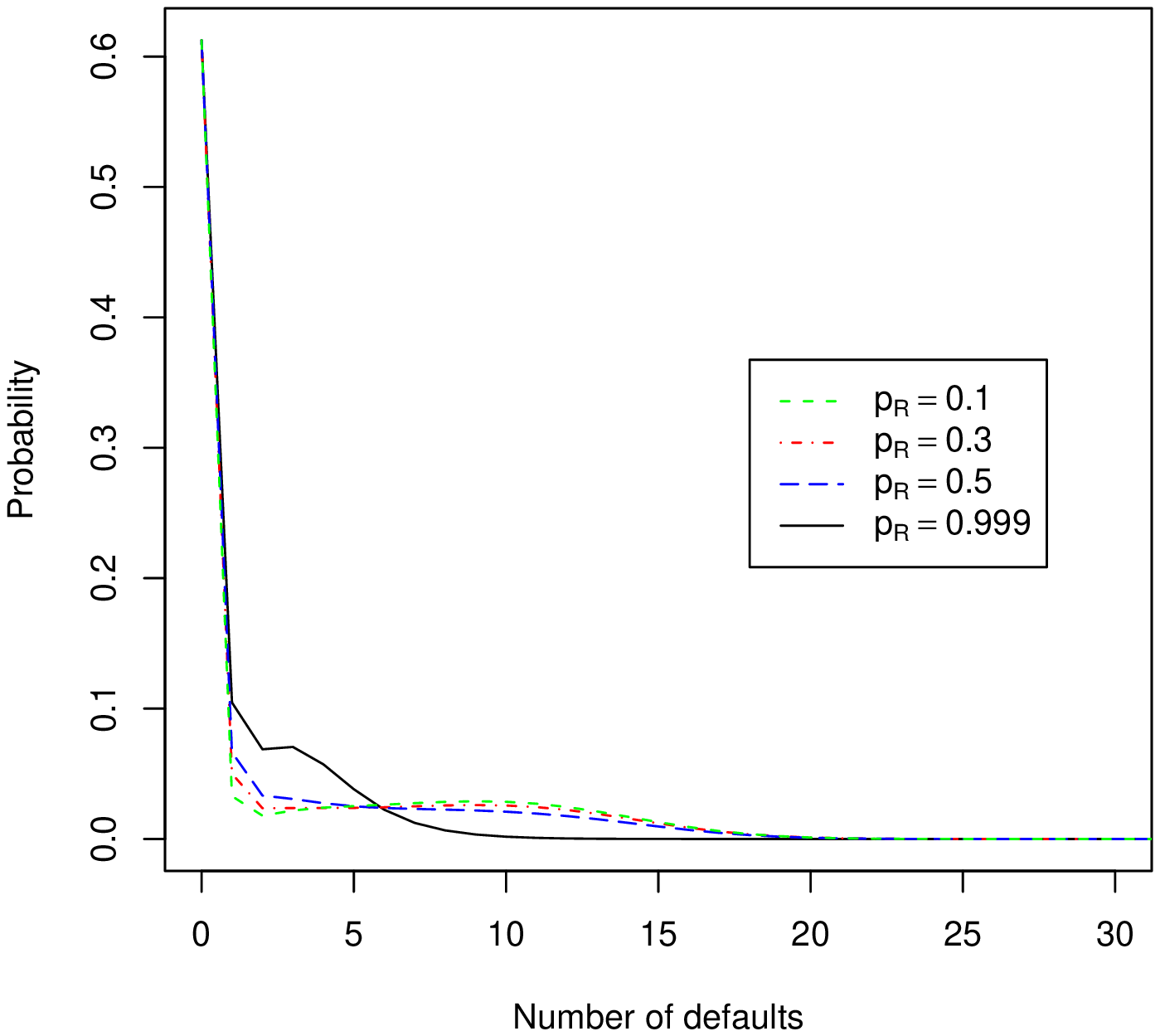}
    }
    \hfill
    \subfigure[10 steps] {
        \label{fig:wr_initial_plotter_step20}
\includegraphics{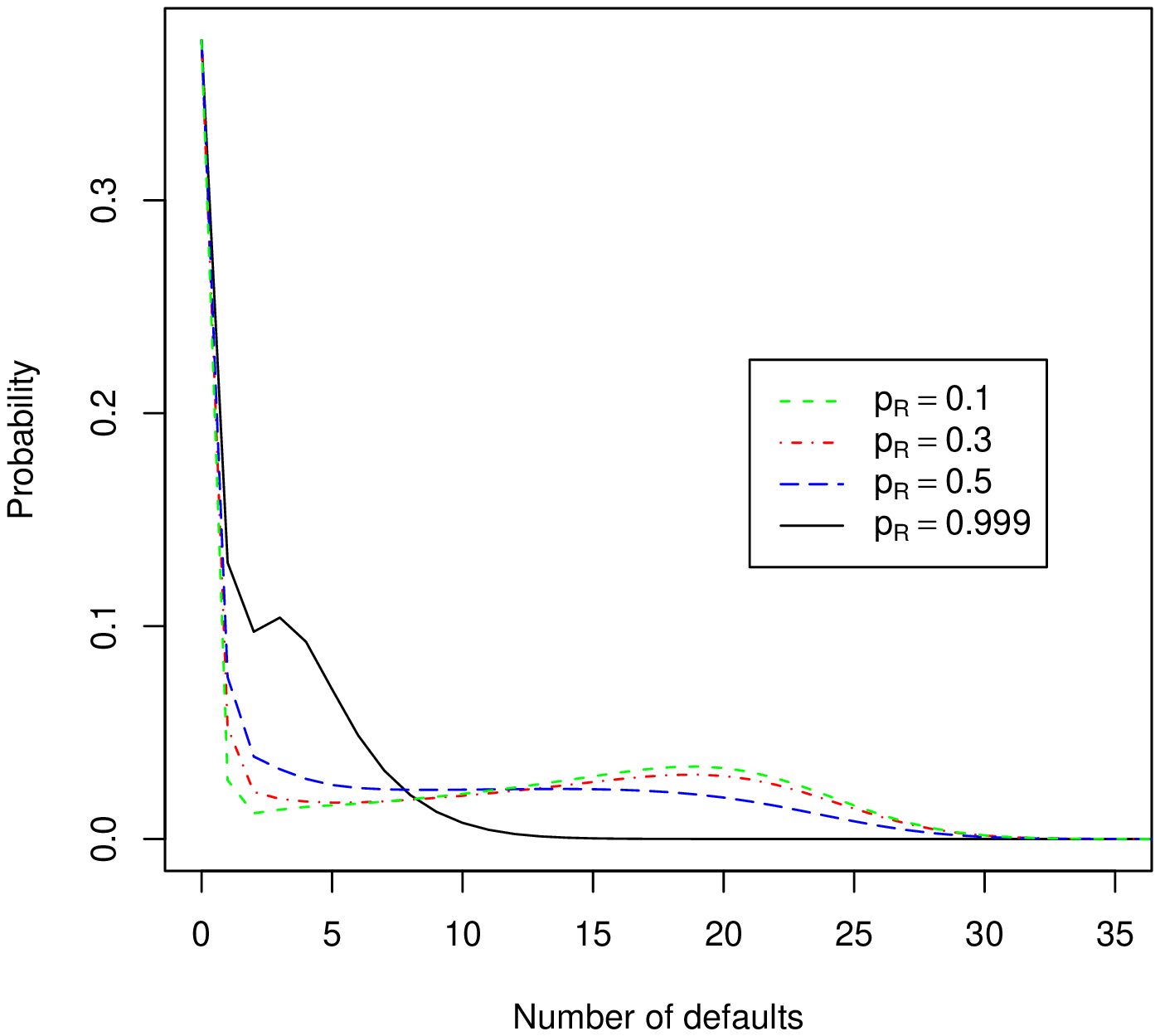}
    }
    \caption{Evolution of number of defaults for varying disappearance probabilities,
    $\eta_F$=-2.8,
    $\eta_S$=5.514,
    $\eta_{FS}$=-5,
    $\rho$=0.05,
    single step default probability 0.005,
    50 firms
    }
    \label{fig:initial_multiperiod_ld}
\end{figure}
\setkeys{Gin}{width=0.8\textwidth}

Figure~\ref{fig:initial_multiperiod_ld} shows the effect of
different parameterizations on loss distributions in a
multi-period model. Take $N=$50 firms and
$(\eta_S,\eta_{FS},\eta_F)$=
$($5.514,
-5,
-2.76$)$. The parameters are chosen
to correspond to the single-firm default probability of 0.005
and default correlation of 0.05.
The loss distribution is then calculated after 5 and 10 steps, with the
removal probabilities $p_R$ = 0.1, 0.3, 0.5, 0.999. Note that
by varying $p_R$, tail of the loss distribution can be
controlled as shown in Figure~\ref{fig:initial_multiperiod_ld}:
Increasing $p_R$ results in thinner tails, whereas lower $p_R$
values can obtain quite fat tails. As the removal
probability increases, the mass is shifted towards fewer defaults.
This is intuitive, since higher $p_R$ values lead to defaulted firms
staying in the system shorter and thus having less detrimental effect
on financially healthy firms.
Moreover, the tails become very
significant as the number of steps increase, despite a relatively
low single-step default probability of 0.005.
Figure~\ref{fig:wr_initial_plotter_step20} shows the importance of
choosing a suitable single step default
    probability. High values for this quantity results in rather strong shifts of mass towards high number of
    defaults. Therefore, for a fixed maturity, whenever the number of steps is increased, the single step default probability has
    to be scaled down accordingly.

\subsection{Comparison to Normal Copula}
\label{sec:copula}

In this section, we compare our model with the widely used
one-factor normal copula model of \cite{Li:00}. Two important
attributes are discussed: the heaviness of the tails in loss distribution, and
the correlation smile in pricing standard tranches.

\paragraph{Heavy tails for One-period Model}

One well-known deficiency of normal copulas with a constant correlation
parameter for all pairs of firms is the thinner ``tails'' in the
loss distribution than observed from market data. In comparison,
Proposition~\ref{lem:binomialdec} suggests that the
loss distribution based on our model can achieve fatter tails.

To demonstrate this, first recall that in a copula model, the default indicator $X_i$ for
firm $i$ is given by $X_i = \ind{M_i \le K}$ where
\begin{eqnarray}
    M_i &=& \sqrt{\rho_A}Y + \sqrt{1-\rho_A}\epsilon_i, \, i \in \cbr{1,\cdots,N} \label{eq:normalCopula}\\
    Y, \epsilon_i &\sim& \textrm{Normal}(0,1)\,\, i.i.d. \nonumber
\end{eqnarray}
$\rho_A = corr(M_i,M_j)$ is the {\em
asset correlation}, assumed to be the same for all pairs of firms.
Note that $K\in \reals$ implicitly specifies the
marginal default probability $\pr{X_i = 1}$ which is assumed to be
the same across all firms.
Note also that the asset correlation $\rho_A$ is different from
the {\em default correlation} $\rho = corr(X_i,X_j)$ which is
given by:
\begin{displaymath}
    \rho = \frac{{\Phi^{-1}(K)}^2 \int_{-\infty}^{\infty} \phi(y) {\Phi\left(
    \frac{K-\sqrt{\rho_A}y}{\sqrt{1-\rho_A}} \right)}^2 dy}{{\Phi(K)}^2
    {(1-\Phi(K))}^2}.
\end{displaymath}
This is an important distinction, as asset
correlation values for the normal copula
result in significantly different values of default indicator correlation.

For comparing loss distributions implied by
the two approaches, take $N=$125, $\pr{X_i = 1}=$
0.05, and two levels of $\rho=$
0.01, 0.05. For
the normal copula, these $\rho$ values lead to
to $\rho_{A}$=0.042, 0.18 respectively.
For our model, take
$(\eta_{FS},\eta_{S},\eta_{F}) = \\
($-0.95,
9.2,
-2.2$)$ and $($
-2.1,
15,
-2$)$ respectively.
These parameters are chosen so as to match the specified $\pr{X_i=1}$ and $\rho$.
For both levels of correlation, as demonstrated in Figure~\ref{fig:compareonefactor} our
model exhibits fatter tails and has smaller loss
probabilities for intermediary values. Furthermore,
the values for the loss distribution are of the same scale.
All these properties
help in correcting the deficiencies of the normal copula when
pricing CDO's, as demonstrated next.

\setkeys{Gin}{width=0.4\textwidth}
\begin{figure}[h]
    \centering
    \subfigure[$\rho=0.01$] {
\includegraphics{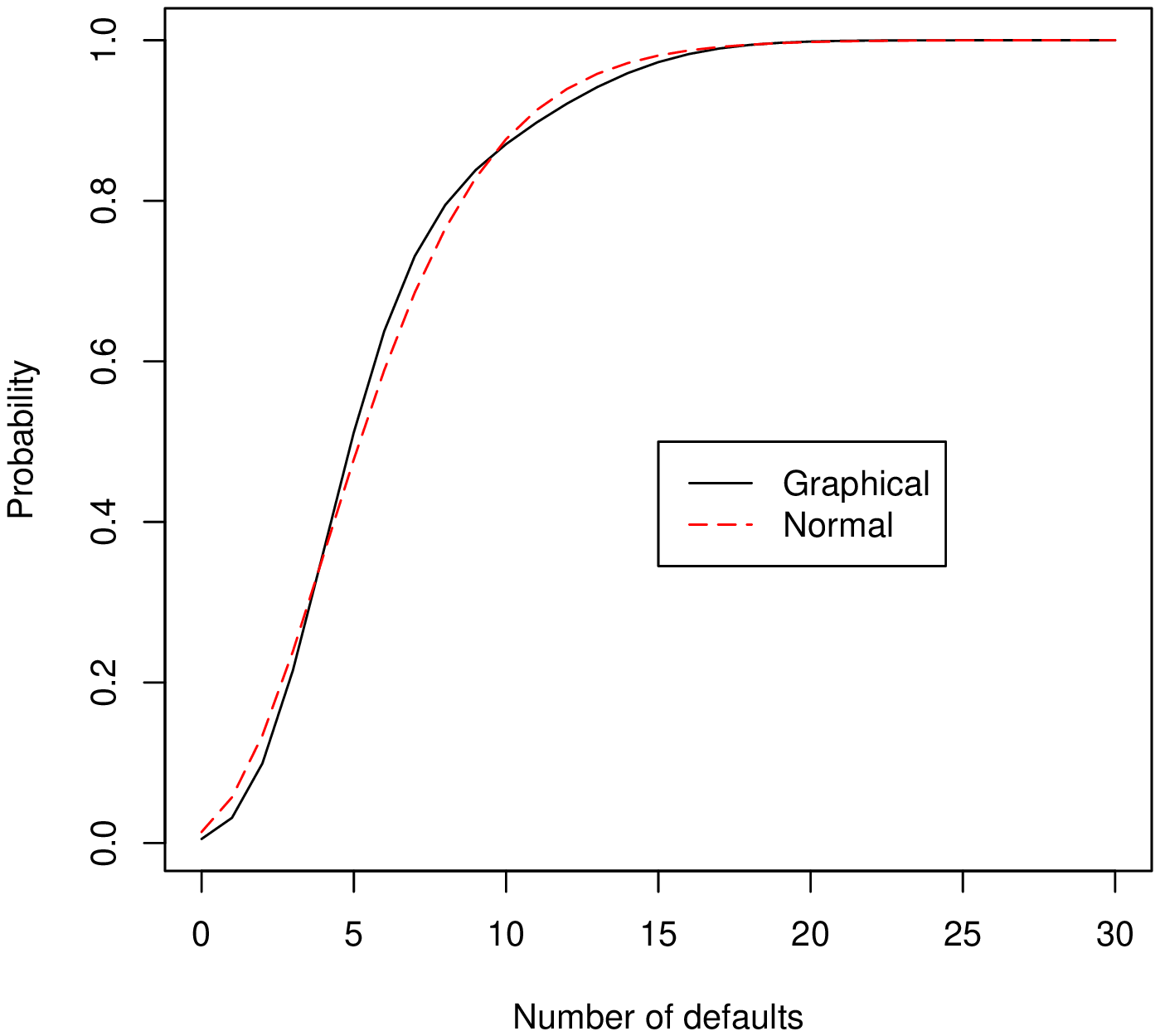}
    }
    \hfill
    \subfigure[$\rho=0.05$] {
\includegraphics{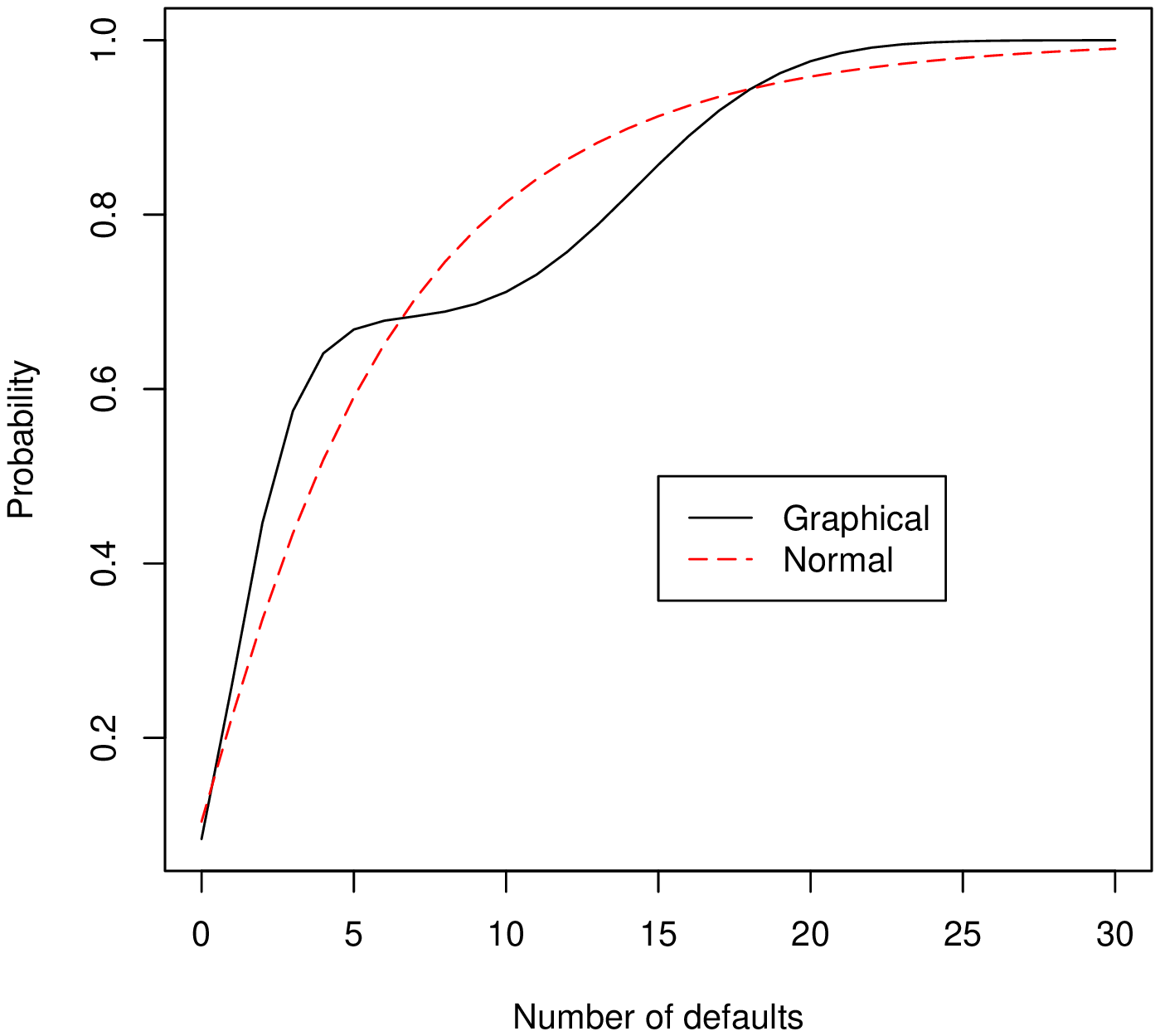}
    }
    \caption{Comparison of one-factor normal copula and single sector one-period graphical models}
    \label{fig:compareonefactor}
\end{figure}
\setkeys{Gin}{width=0.8\textwidth}


\paragraph{Correlation Smile}
\label{ssec:multiperiodgaussiancomparison}

For the normal copula, the pricing scheme
of~\cite{Li:00} and Hull and White~\cite{HullWhite:04} is utilized, where the default time
$\tau_i$ for a firm is defined through a transformation of $M_i$
in Equation~(\ref{eq:normalCopula}). More specifically, the
risk-free interest rate $r$ and the recovery rate $R$ are taken to
be constants, $\tau_i$ is assumed to be distributed exponentially
with rate $\lambda$, and $M_i$ is mapped to $\tau_i$ using a
percentile-to-percentile transformation so that for any given
realization $\hat{M}_i$:
\begin{eqnarray}
    \hat{\tau}_i &=& \frac{- \ln(1-\Phi(\hat{M}_i))}{\lambda}. \label{eq:normalTauTransformation}
\end{eqnarray}
The spreads $s_l$ are then calculated by simulating $M_i$ values
and replacing the expected values in the pricing formula in
Equation~(\ref{eq:trancheSpread}) by their respective estimators.

Recall that
given a standard tranche on CDX.NA.IG, with given observed spread
$s_l$, and known $r,R,\lambda$, it is possible to ``imply'' the
asset correlation parameter $\rho_A$ in
Equation~(\ref{eq:normalCopula}). However, it is known (e.g.
\cite{AmatoGyntelberg:05}, \cite{HagerSchoebel:06}) that implying
$\rho_A$ in such a manner across all tranches results in a ``smile'':
The mezzanine tranche has lower implied correlation compared to
the neighboring tranches. One plausible interpretation
for this kind of smile is that the normal copula model
underprices the senior tranches and overprices the equity tranche
in comparison to the mezzanine tranche.

We now demonstrate that our model has the potential to correct this smile.
To achieve this, first we calculate prices from normal copula. We then
find parameters $(\eta_F, \eta_{FS}, \eta_S, p_R)$ such that our model
matches the mezzanine tranche spread exactly with
those from normal copula while giving significantly lower
spreads for the equity tranche and higher spreads for the senior tranches.

More specifically,
take two different credit rating classes, representing high and low credit
ratings respectively, so that
\begin{itemize}
    \item the one-year default probabilities are set at
	    0.001
	    for high-rating class and
	    0.015
	    for low-rating class,
    \item the asset correlation values for the normal copula are
	    0.2
	    and
	    0.3
	    (these values correspond to default indicator correlations of
	    0.0059
	    and
	    0.0562
	    ),
    \item the recovery rate is
	    0.4,
	    the interest rate
	    0.05,
    and
	    $N$=50,
    \item the maturity of the CDO is
	    5
	    years with payment frequency
	    0.5
	    corresponding to a ten period model.
\end{itemize}
Meanwhile, for each rating class, an optimization on $(\eta_{FS}, \eta_S, p_R)$
maximizing the difference between
equity tranche spread for the normal copula and our model, and senior tranche
spreads for our model and the normal copula is run. Parameter $\eta_F$ is constrained
so that both the one-year default probability and the mezzanine spread are matched.

Figure~\ref{fig:spread_comparison} shows the output of one such optimization run.
It demonstrates that even with a flat correlation value for the
graphical model, one can obtain lower prices for
mezzanine tranche and higher for the senior tranches in comparison to the normal
copula, thus ``correcting'' the correlation smile.

\setkeys{Gin}{width=0.4\textwidth}
\begin{figure}[!h]
    \centering
    \subfigure[High-rating CDO]{
    \label{fig:class_B}
\includegraphics{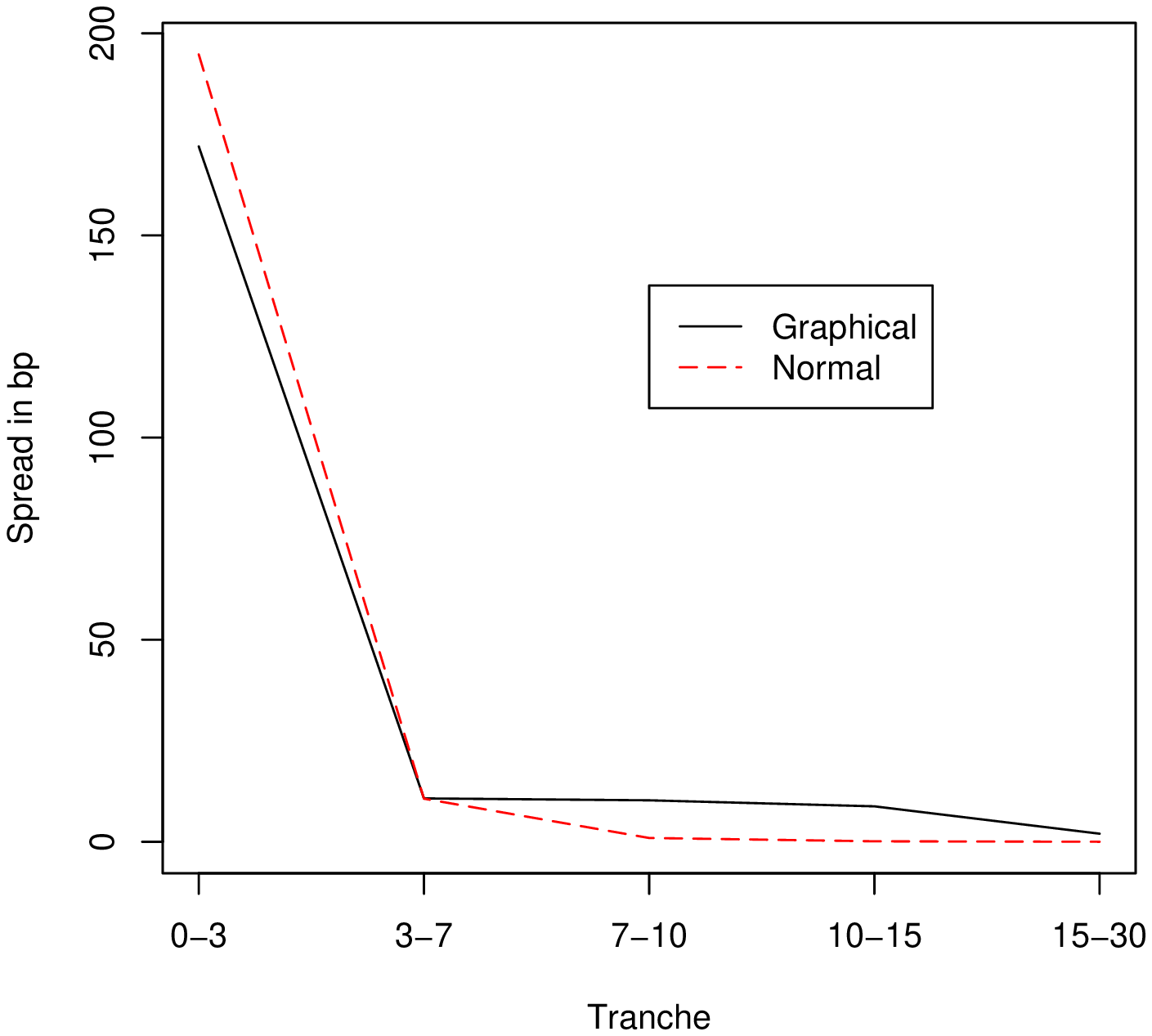}
} \hspace{0.5in}
    \subfigure[Low-rating CDO]{ 
    \label{fig:class_C}
\includegraphics{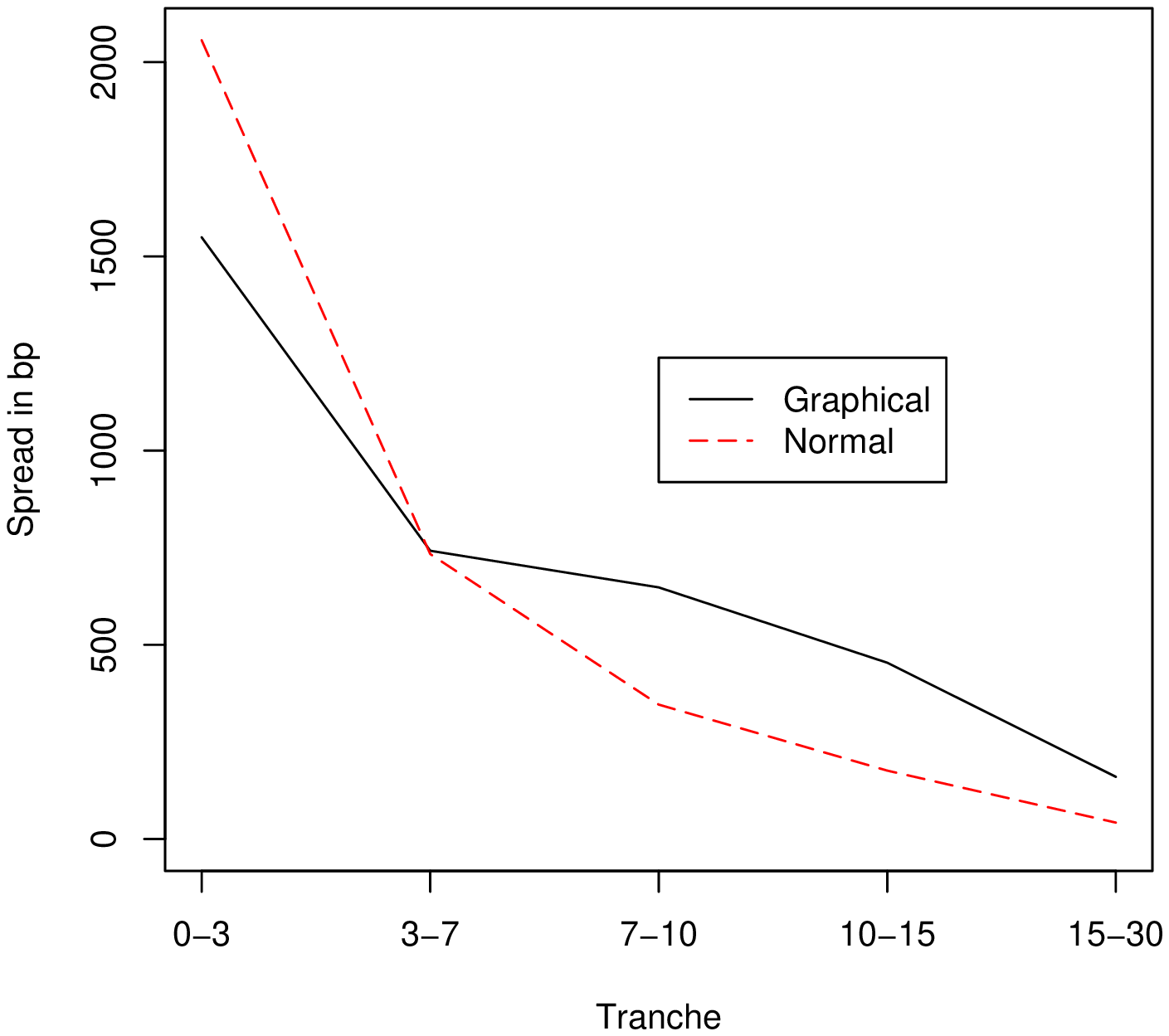}
}
    \caption{Tranche spreads for graphical and normal copula models}
    \label{fig:spread_comparison}
\end{figure}
\setkeys{Gin}{width=0.8\textwidth}

\section{Conclusion}

This paper proposes and analyzes a simple graphical model
for modelling correlated default. The graphical representation
provides an effective shorthand to depict the dependence
relationships between the $N$ firms, with the desirable conditional
independence property. The
probability distribution proposed is a {\em toric model}
which is beneficial in both
parameter estimation and simulation.
With some
homogeneity assumptions, loss distributions and CDO prices are
obtained analytically. The model
generates heavy tails in the loss distribution, and its dynamic
formulation seems promising for correcting
the correlation smile observed in one-factor normal copula.


\appendix

\section{Birch's Theorem} 
\label{app:birch}
Our proof follows that in \cite{Fulton1993}, and begins with a lemma.

\begin{lemma} \label{lem:FultonAB}
Let $A=(a_{ij})$ be a real $d \times m$ matrix of rank $d$
and $\pos(A)$ the $\R_{\geq 0}$-span of its columns $a_{\cdot 1}, \dots, a_{\cdot m}$.  Let $t^1, \dots, t^m \in \R_{>0}$ be real positive numbers and define 
\[
F: \R^d \ra \R^d\]\[
\eta \mapsto \sum_{j=1}^m t^j e^{\langle a_{\cdot j}, \eta \rangle} a_{\cdot j}
\] 
That is, $F( \eta) = A (t^1 e^{\langle a_{\cdot 1}, \eta \rangle},  \dots, t^m e^{\langle a_{\cdot m}, \eta \rangle})^{\transpose}$.  Then $F$ determines a real analytic isomorphism of $\R^d$ onto the interior of $\pos(A)$.
\end{lemma}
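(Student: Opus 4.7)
The plan is to realize $F$ as the gradient of a strictly convex potential and then use convex-analytic arguments (strict convexity plus coercivity) to obtain bijectivity, upgrading to a real-analytic isomorphism via the analytic inverse function theorem. Define
\[
G(\eta)\,:=\,\sum_{j=1}^m t^j\, e^{\langle a_{\cdot j},\,\eta\rangle},
\]
so that a direct differentiation yields $\nabla G(\eta) = F(\eta)$ and
\[
\nabla^2 G(\eta)\,=\,A\,\mathrm{diag}\!\bigl(t^j e^{\langle a_{\cdot j},\eta\rangle}\bigr)\, A^{\transpose}.
\]
Because the diagonal entries are strictly positive and $A$ has rank $d$, this Hessian is positive definite for every $\eta$. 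Hence $G$ is strictly convex and $DF$ is everywhere nonsingular, so $F$ is a real-analytic local diffeomorphism.

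Next I would check that $F(\eta)$ always lies in $\interior(\pos(A))$. Writing $c_j := t^j e^{\langle a_{\cdot j},\eta\rangle} > 0$, we have $F(\eta) = \sum_j c_j\, a_{\cdot j}$, a strictly positive combination of all the columns; since the columns span $\R^d$ the coefficients can be perturbed in any direction while staying positive, so $F(\eta)\in\interior(\pos(A))$. Injectivity of $F$ then follows from strict convexity of $G$ (the gradient of a strictly convex function is injective).

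For surjectivity onto $\interior(\pos(A))$, fix $b\in\interior(\pos(A))$ and consider $H_b(\eta) := G(\eta) - \langle b,\eta\rangle$, which is strictly convex with gradient $F(\eta)-b$. The task reduces to proving $H_b$ is coercive, because then a unique minimizer $\eta^\ast$ exists and satisfies $F(\eta^\ast) = b$. Along any unit vector $v$, consider $r\to\infty$: if some $\langle a_{\cdot j}, v\rangle > 0$, then $G(rv)$ grows exponentially and dominates $-r\langle b,v\rangle$; otherwise $\langle a_{\cdot j}, v\rangle \leq 0$ for every $j$, which means $-v$ lies in the dual cone $(\pos(A))^\ast\setminus\{0\}$, and the interiority of $b$ yields $\langle b,-v\rangle>0$, so $-r\langle b,v\rangle\to+\infty$ while $G(rv)$ remains bounded below. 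In either case $H_b(rv)\to\infty$.

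The last step is a routine application of the real-analytic inverse function theorem at each point, which, combined with global bijectivity, shows that $F^{-1}$ is real analytic on $\interior(\pos(A))$. The main obstacle I anticipate is the second case of the coercivity argument: it relies on the duality characterization $b\in\interior(\pos(A)) \iff \langle b,u\rangle > 0$ for all nonzero $u\in(\pos(A))^\ast$, which in turn uses that $\pos(A)$ is full-dimensional (a consequence of $\rank A = d$). Getting this step clean is what ties the rank hypothesis to the ``interior of $\pos(A)$'' in the conclusion.
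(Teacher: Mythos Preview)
Your argument is correct and takes a genuinely different route from the paper's proof. The paper also computes the Jacobian $A\,\mathrm{diag}(t^j e^{\langle a_{\cdot j},\eta\rangle})\,A^{\transpose}$ and uses its positive definiteness to get a local isomorphism, but from there it proceeds geometrically: injectivity is obtained by restricting $F$ to lines and reducing to the scalar case $d=1$, and surjectivity is proved by an induction on $d$ showing that $\im(F)$ is convex (via projecting to $\R^{d-1}$ and applying the inductive hypothesis to the restricted maps $G_y$), together with a separate argument that $\im(F)$ comes arbitrarily close to each extreme ray of $\pos(A)$. Your approach instead recognizes $F=\nabla G$ for the strictly convex potential $G(\eta)=\sum_j t^j e^{\langle a_{\cdot j},\eta\rangle}$ and replaces the inductive convexity argument by the variational problem of minimizing $G(\eta)-\langle b,\eta\rangle$; coercivity via the dual-cone characterization of $\interior(\pos(A))$ then gives surjectivity directly. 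What you gain is a shorter, more conceptual proof that also hands you $F^{-1}(b)$ as the unique minimizer of an explicit convex function (useful computationally and tying in naturally with the maximum-likelihood interpretation elsewhere in the paper); what the paper's proof gains is a self-contained argument that does not invoke cone duality and gives a direct picture of how the image fills out the cone. One small point to tighten: your coercivity argument is phrased along rays $r v$, but to conclude $H_b(\eta)\to\infty$ as $|\eta|\to\infty$ you should pass to a convergent subsequence of directions $v_n\to v$ and note that your two cases are governed by the limit direction $v$, which is routine.
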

\begin{proof}
First, the fact that $F$ is an injective local isomorphism with image points arbitrarily close to the extreme rays of $\pos(A)$ is established.  Then the result will follow from an inductive proof that $\im(F)$ is convex.

We have $F(\eta)_i = \sum_j t^j e^{a^{\ell}_j \eta_{\ell}} a^i_j$, so 
\[
\frac{\partial F_i}{\partial \eta_k} = \sum_j t^j  e^{a^{\ell}_j \eta_{\ell}} a^i_j a^k_j
\]
so that the Jacobian is symmetric.  Moreover, the quadratic form is given by
\[
x_i( t^j  e^{a^{\ell}_j \eta_{\ell}} a^i_j a^k_j)x_k = (x_i a^i_j) t^j  e^{a^{\ell}_j \eta_{\ell}}  (a^k_j x_k) = t^j  e^{a^{\ell}_j \eta_{\ell}} (x_i a^i_j)^2
\]
which is strictly positive for $x \not= 0$
 since the $a^{\cdot}_j$ span $\R^d$, so some $a^i_j \neq 0$. 
 This shows that $F$ is a local isomorphism.  To show it is injective, it is sufficient to check that $F$ is injective on a line, and by a change of coordinates this reduces the problem to the case $d=1$.  In this case, the $a^{\cdot}_j$ are scalars $a_j$ and $F$ sends $\eta \in \R$ to $t^j a_j e^{a^i_j \eta_i}$, with strictly positive derivative as above.  $\pos(A)$ is either $[0,\infty)$, $(\infty, 0]$, or $(-\infty, \infty)$ depending on the signs of the $a_j$, and $F$ is an isomorphism of manifolds.  Thus the $d=1$ case shows $F$ is injective and will also serve as the base case for our induction.

By grouping the $a^{\cdot}_j$ and changing the $t^j$, one may assume no two $a^{\cdot}_j$ lie on the same ray.  Suppose $a^{\cdot}_1$ generates an extreme ray of $\pos(A)$.  Then one may choose $v$ such that $\langle v, a^{\cdot}_1 \rangle=0$ but $\langle v , a^{\cdot}_1 \rangle <0$ for $j \neq 1$.  Then $F(\lambda v + \eta) = t^1 e^{\langle \eta , a^{\cdot}_1 \rangle} a^{\cdot}_1 +  \cdots + t^1 e^{\langle \eta , a^{\cdot}_1 \rangle + \lambda \langle v , a^{\cdot}_m \rangle}   a^{\cdot}_m$, so \[
\lim_{\lambda \ra \infty} F( \lambda v + \eta) = t^1 e^{\langle \eta , a^{\cdot}_1 \rangle} a^{\cdot}_1
\]
so that one can approach any point on $\R_{\geq 0} a^{\cdot}_1$ arbitrarily closely by adjusting $\eta$.

It remains to show that the image of $F$ is convex.  Suppose $\im(F)$ is convex for $d-1$, and let $L$ be a line in $\R^d$; to show $\im(F)$ is convex for $d$, one must show that $L \cap \im(F)$ is connected or empty.  One can write $L = \pi^{-1} q$ for a suitable projection $\pi:\R^d \ra \R^{d-1}$ and point $q \in \R^{d-1}$, and 
\[
\im(F) \intersect L = F(\R^d) \intersect \pi^{-1}(q) = F((\pi \circ F)^{-1} (q)).
\]
By a linear change of coordinates in $\R^d$, one may assume that $\pi$ is projection onto the first $d-1$ coordinates.  Let $\rho$ denote the projection to the last coordinate, $\eta \mapsto \eta_d$.  Let $G = \pi \circ F$ and for $y \in \R$ let $G_y$ be the restriction of $G$ to $\rho^{-1} y$.  

This defines a map $G_y: \R^{d-1} \ra \R^{d-1}$.  $G_y(\eta_1, \dots, \eta_{d-1}) = \sum_j s^j e^{\eta_{\hat{d}} \cdot a^{\hat{d}}_j} a^{\hat{d}}_j$ with $s^j = t^j e^{y a^d_j} >0$.  Still the columns of its defining matrix $\hat{A}$ ($A$ without its last row) span $\R^{d-1}$, so $G_y$  meets the hypotheses of the theorem for $d-1$.  Thus each $G_y$ is an injective map onto $\interior(\pi (\pos(A)))$.  Then for each $q$ in $\interior(\pi(\pos(A)))$, the projection $G^{-1}(q) = (\pi \circ F)^{-1}(q)$ to $\R$ induced by $\rho$ is a bijection.  So the intersection is connected.
\end{proof}

Now Lemma~\ref{lem:FultonAB} can be applied to polytopes.

\begin{prop}
Let $A=(a_{ij})$ be a real $d \times m$ matrix, and $K$ be the convex hull of its columns $a_{\cdot 1}, \dots, a_{\cdot m}$.  
Further require that the $a_{\cdot j}$ not be contained in any hyperplane.
Let $t_1, \dots, t_m \in \R_{>0}$ be real positive numbers and define 
\[
H: \R^d \ra \R^d\]\[
\eta \mapsto \frac{1}{Z(\eta)}\sum_{j=1}^m t^j e^{\langle a_{\cdot j}, \eta \rangle} a_{\cdot j}
\] 
where $Z(\eta) = t_1 e^{\langle a_1^{\cdot},\eta \rangle} + \cdots t_m e^{\langle a_m^{\cdot},\eta \rangle}$.  Then $H$ defines a real analytic isomorphism of $\R^d$ onto the interior  of $K$.
\end{prop}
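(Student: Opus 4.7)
My plan is to reduce the statement to Lemma~\ref{lem:FultonAB} via the standard device of homogenization: append a row of ones to $A$ to turn the convex hull question into a cone question, then project back.

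Concretely, let $\tilde A$ be the $(d+1)\times m$ matrix obtained from $A$ by appending a row of $1$'s, so its columns are $\tilde a_{\cdot j}=(a_{\cdot j},1)^{\transpose}$. The hypothesis that the $a_{\cdot j}$ do not lie in any (affine) hyperplane of $\R^d$ is exactly the statement that $\tilde A$ has full row rank $d+1$, so the lemma applies to $\tilde A$. Writing a point of $\R^{d+1}$ as $(\eta,\eta_{d+1})$ with $\eta\in\R^d$, one computes $\langle \tilde a_{\cdot j},(\eta,\eta_{d+1})\rangle=\langle a_{\cdot j},\eta\rangle+\eta_{d+1}$, hence
\[
\tilde F(\eta,\eta_{d+1})\;=\;e^{\eta_{d+1}}\Bigl(\,\sum_j t^j e^{\langle a_{\cdot j},\eta\rangle}a_{\cdot j}\,,\;Z(\eta)\Bigr).
\]
So the ``height-$1$'' slice of $\mathrm{pos}(\tilde A)$, i.e.\ $\mathrm{pos}(\tilde A)\cap\{y_{d+1}=1\}$, is exactly $K\times\{1\}$, and similarly its interior (inside the hyperplane) is $\interior(K)\times\{1\}$. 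Rescaling $\tilde F(\eta,\eta_{d+1})$ by its last coordinate produces precisely $(H(\eta),1)$.

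The next step is to turn this computation into a bijection statement. The projection $\pi(y)=y/y_{d+1}$ from $\interior(\pos(\tilde A))$ onto $\interior(K)\times\{1\}$ is a surjection whose fibers are open rays from the origin. The identity $\tilde F(\eta,\eta_{d+1})=e^{\eta_{d+1}}\tilde F(\eta,0)$ shows the full fiber $\{\eta\}\times\R$ maps under $\tilde F$ into a single ray, so $\pi\circ\tilde F$ descends through projection onto the first $d$ coordinates to the map $H$. Injectivity of $H$ follows because if $H(\eta)=H(\eta')$ then $\tilde F(\eta,0)$ and $\tilde F(\eta',0)$ lie on the same ray, i.e.\ $\tilde F(\eta,0)=e^{c}\tilde F(\eta',0)=\tilde F(\eta',c)$ for some $c$, and injectivity of $\tilde F$ forces $\eta=\eta'$. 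Surjectivity of $H$ is immediate from surjectivity of $\pi\circ\tilde F$ onto $\interior(K)$.

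It remains to verify that $H$ is a real analytic isomorphism, not just a bijection. Analyticity of $H$ is clear from the formula. For the inverse, a direct calculation gives
\[
\frac{\partial H_i}{\partial \eta_k}(\eta)\;=\;\sum_j p_j(\eta)\,a^i_j a^k_j\;-\;H_i(\eta)H_k(\eta)\;=\;\cov{a^i,a^k},
\]
the covariance under the probability weights $p_j(\eta)\propto t^j e^{\langle a_{\cdot j},\eta\rangle}$. For any nonzero $v\in\R^d$, $v^{\transpose}\!\cov{a^i,a^k}v=\var{\sum_i v_i a^i_{\cdot}}$ vanishes only if $\sum_i v_i a^i_j$ is constant in $j$, which would exhibit a nontrivial linear relation among the rows of $\tilde A$ and contradict $\rank(\tilde A)=d+1$. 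Hence the Jacobian of $H$ is positive definite everywhere, so $H$ is a local real analytic isomorphism; combined with bijectivity this gives the desired global isomorphism $\R^d\to\interior(K)$. The one conceptual step I expect to require the most care is justifying that the interior of $\pos(\tilde A)$ intersected with $\{y_{d+1}=1\}$ really does equal $\interior(K)\times\{1\}$, which uses the full-dimensionality hypothesis on $K$.
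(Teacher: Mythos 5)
Your proof is correct and follows essentially the same route as the paper's: lift to the cone over $K$ by appending a row of ones, apply Lemma~\ref{lem:FultonAB} to $\tilde A$ (using that affine non-degeneracy of the $a_{\cdot j}$ gives $\rank(\tilde A)=d+1$), and recover $H$ from the height-one slice. The only differences are cosmetic — the paper realizes the slice via the explicit section $\eta_{d+1}=-\log Z(\eta)$, whereas you use the radial projection $y\mapsto y/y_{d+1}$ and add an (independently correct, though not strictly needed) covariance-matrix computation showing the Jacobian of $H$ is positive definite.
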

\begin{proof}
Form the cone over $K$ in $\R^{d+1}$, letting $\tilde{a_j} = a^1_j, \dots, a^d_j, 1$.  Let \[
\tilde{F}: \R^{d+1} \ra \R^{d+1}\]\[
\eta,\eta_{d+1} \mapsto \sum_j t^j e^{\langle a_j, \eta \rangle}e^{\eta_{d+1}} \tilde{a_j}
\]
since the $a_j$ were not contained in any hyperplane, after lifting they still span $\R^{d+1}$.  Then by Lemma \ref{lem:FultonAB}, $\tilde{F}$ maps $\R^{d+1}$ isomorphically onto $\interior(\pi(\pos(\tilde{A})))$.  The last coordinate of $\tilde{F}$ is $\sum_j t^j e^{\langle a_j, \eta \rangle}e^{\eta_{d+1}}$; this is equal to $1$ when $\eta_{d+1} = - \log (Z(\eta))$.  Thus $H(\eta) = \tilde{F}(\eta, - \log (Z(\eta)))$ maps $\R^d$ isomorphically onto $\interior(K)$.
\end{proof}
Note that if all $a_j$'s lie in a hyperplane (such as when the column sums of $A$ are equal), coordinates can be changed so that the last row of $A$ is all ones and the $a^{\hat{d}}_j$ do not lie in a hyperplane.

\section{Proof of Corollary~\ref{cor:max_likelihood}}
\label{app:max_likelihood}

The proof is from \cite[Proposition 1.9]{PachterSturmfels:05}.
First assume that $P_{\bullet}$ is a vector of marginal default probabilities and linear correlations with $M+|E|+1$
coordinates, where the last coordinate is $1$.
This implies that there exists a $q \in \cbr{p: \sum_{w=1}^{2^M} p_w = 1}$ such that $A_G q = P_\bullet$. Now take $u=q$.
Equation~(\ref{eq:likelihood_function}) can equivalently be written as:
\begin{displaymath}
	\textrm{Maximize } \theta^{A_G u} \textrm{ subject to } \theta \in \reals^{M+|E|}_{>0} \textrm{ and } \sum_{j=1}^{2^M} \theta^{a_j} = 1
\end{displaymath}
where:
\begin{displaymath}
	\theta^{A_G u} :=\prod_{i=1}^{M+|E|}\prod_{j=1}^{2^M} \theta_i^{a_{ij}u_j} = \prod_{i=1}^{M+|E|} \theta_{i}^{a_{i1}u_1 + \cdots + a_{i,2^M}u_{2^M}} \textrm{ and } \theta^{a_j} = \prod_{i=1}^{M+|E|} \theta_i^{a_{ij}}
\end{displaymath}

Writing $b=A_G u$ for the sufficient statistic, our optimization problem is:
\begin{equation}
	\textrm{Maximize } \theta^{b} \textrm{ subject to } \theta \in \reals^{M+|E|}_{>0} \textrm{ and } \sum_{j=1}^{2^M} \theta^{a_j} = 1
\label{eq:optimization}
\end{equation}

Using $f(\theta):= (p_1(\log \theta), \cdots, p_{2^M}(\log \theta))$ with $p_w(.)$ given by Equation~(\ref{eq:Ising}):
\begin{prop}
	Let $\hat p = f(\hat \theta)$ be any local maximum for the problem~(\ref{eq:optimization}). Then:
	\begin{displaymath}
		A_G \hat p = b
	\end{displaymath}
\end{prop}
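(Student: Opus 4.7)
The plan is to apply the method of Lagrange multipliers to the reformulated problem~(\ref{eq:optimization}), working on the log scale so that the objective $\theta^b$ becomes the linear function $\sum_i b_i \log \theta_i$ while the constraint $g(\theta) = \sum_{j=1}^{2^M}\theta^{a_j} - 1$ remains smooth on the positive orthant. At any local maximum $\hat\theta$, first-order conditions produce a multiplier $\mu \in \R$ satisfying $\nabla \log(\theta^b)\big|_{\hat\theta} = \mu\,\nabla g(\hat\theta)$.

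Component-wise I would compute
\[ \partial_{\theta_i} \log(\theta^b) = b_i/\theta_i, \qquad \partial_{\theta_i} g(\theta) = \sum_j a_{ij}\,\theta^{a_j}/\theta_i = (A_G p)_i/\theta_i, \]
writing $p_j := \theta^{a_j}$. Cancelling $1/\hat\theta_i$ on both sides of the Lagrange condition yields $b_i = \mu\,(A_G \hat p)_i$ for every index $i$. To identify $\mu$, I would exploit the structure of $A_G$: its last row consists entirely of $1$s (encoding the trivial marginal $\sum_j p_j$). Reading the critical-point equation along that coordinate, $(A_G \hat p)_{\text{last}} = \sum_j \hat p_j = 1$ by the constraint $g(\hat\theta) = 0$, while $b_{\text{last}} = (A_G u)_{\text{last}} = \sum_j u_j = 1$ because $u = q$ was chosen as a probability distribution. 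Thus $\mu = 1$, and the desired identity $A_G \hat p = b$ follows.

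The main subtlety is bookkeeping rather than analysis: the parameter vector $\theta$ must be understood to include the coordinate dual to the all-ones row of $A_G$ (the normalization factor $\theta_0 = 1/Z$), so that the Lagrange condition furnishes an equation in each of the $M+|E|+1$ rows of $A_G$. Once that is set up, the all-ones row simultaneously gives $(A_G\hat p)_{\text{last}} = 1$ from the feasibility constraint and $b_{\text{last}} = 1$ from $u$ being a probability distribution, forcing $\mu = 1$ without additional work. No deeper analytic obstacle should arise beyond this indexing check and the routine partial derivatives above.
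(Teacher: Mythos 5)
Your argument is correct and essentially the paper's own: the first-order Lagrange conditions give $b_i = \mu\,(A_G\hat p)_i$ — your log-scale gradient is just the paper's scaled gradient operator $\theta\cdot\nabla_\theta$ divided by the positive constant $(\hat\theta)^b$, so $\mu = \lambda/(\hat\theta)^b$ — and the multiplier is pinned to $1$ exactly as in the paper, via the all-ones row of $A_G$ together with $\sum_j\hat p_j = 1$ and $\sum_j u_j = 1$. Your bookkeeping remark about carrying the coordinate dual to the all-ones row (the normalization factor) matches, and if anything makes slightly more explicit, the paper's treatment following Pachter--Sturmfels.
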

\begin{proof}
Introduce a {\em Lagrange multiplier} $\lambda$. Every local optimum of (\ref{eq:optimization})
is a critical point of the following function in $M+|E|+1$ unknowns $\theta_1, \cdots, \theta_{M+|E|}, \lambda$:
\begin{displaymath}
	\theta^b + \lambda \left( 1 - \sum_{j=1}^{2^M} \theta^{a_j} \right)
\end{displaymath}

Apply the scaled gradient operator
\begin{displaymath}
	\theta \cdot \nabla_\theta = \left( \theta_1 \frac{\partial}{\partial \theta_1}, \cdots, \theta_{M+|E|} \frac{\partial}{\partial \theta_{M+|E|}} \right)
\end{displaymath}
to the function above. The resulting critical equations for $\hat \theta$ and $\hat p$ state that
\begin{displaymath}
	(\hat \theta)^b \cdot b = \lambda \sum_{j=1}^{2^M} (\hat \theta)^{a_j} a_j = \lambda A_G \hat p
\end{displaymath}
This says that the vector $A_G\hat p$ is a scalar multiple of the vector $b=A_Gu$. Since the
last row of $A_G$ is assumed to be $(1,\cdots,1)$, and last element of $b$ to be $1$, $A_G \hat p = b$.
\end{proof}

As the matrix $A_G$ is assumed to have full row rank,
the proof of Corollary~\ref{cor:max_likelihood} follows from Theorem~\ref{prop:representation}, which states
that the parameters satisfying $A_G p(\eta) = P_\bullet$ are unique.
\end{document}